\newtheorem{assumption}{Assumption}[section]
\newtheorem{theorem}{Theorem}[section]
\newtheorem{lemma}{Lemma}[section]
\newtheorem{corollary}{Corollary}[section]
\theoremstyle{plain}
\newcommand{\bdelta}{{\boldsymbol{\delta}}}
\newcommand{\bV}{{\boldsymbol{V}}}
\newcommand{\bv}{{\boldsymbol{v}}}
\newcommand{\bY}{{\boldsymbol{Y}}}
\newcommand{\by}{{\boldsymbol{y}}}
\newcommand{\bu}{{\boldsymbol{u}}}
\newcommand{\bU}{{\boldsymbol{U}}}
\newcommand{\bG}{{\boldsymbol{G}}}
\newcommand{\btheta}{{\boldsymbol{\theta}}}
\numberwithin{equation}{section}
\theoremstyle{plain}
\newtheorem{remark}{Remark}[section]
\title{Nonparametric generalized fiducial inference for survival functions under censoring}
\author[1]{Yifan Cui}
\author[1]{Jan Hannig}
\affil[1]{Department of Statistics and Operations Research, University of North Carolina, Chapel Hill, NC 27599}
\begin{document}
\maketitle








\begin{abstract}
Fiducial Inference, introduced by Fisher in the 1930s, has a long history, which at times aroused passionate disagreements. However, its application has been largely confined to relatively simple parametric problems. In this paper, we present what might be the first time fiducial inference, as generalized by \citet{hannig2016generalized},  is systematically applied to estimation of a nonparametric survival function under right censoring.  We find that the resulting fiducial distribution gives rise to surprisingly good statistical procedures applicable to both one sample and two sample problems. In particular, we use the fiducial distribution of a survival function to construct pointwise and curvewise confidence intervals for the survival function, and propose tests based on the curvewise confidence interval. We establish a functional Bernstein-von Mises theorem, and perform thorough simulation studies in scenarios with different levels of censoring. The proposed fiducial based confidence intervals maintain coverage in situations where asymptotic methods often have substantial coverage problems. Furthermore, the average length of the proposed confidence intervals is often shorter than the length of competing methods that maintain coverage. Finally, the proposed fiducial test is more powerful than various types of log-rank tests and sup log-rank tests in some scenarios. We illustrate the proposed fiducial test comparing chemotherapy against chemotherapy combined with radiotherapy using data from the treatment of locally unresectable gastric cancer.
\end{abstract}

\smallskip
\noindent \textbf{Keywords:} Generalized fiducial inference, Right censored data, Nonparametric model, Coverage, Testing.



\section{Introduction}

Fiducial inference can be traced back to a series of articles by the father of modern statistics R.~A.~\cite{Fisher1925,  Fisher1930, Fisher1933, Fisher1935a} who introduced the concept 
 as a potential replacement of the Bayesian posterior distribution. A systematic development of the idea has been hampered by ambiguity, as \cite{Brillinger1962examples} describes:
``The reason for this lack of agreement and the resulting controversy is possibly
due to the fact that the fiducial method has been put forward as a general logical principle, but yet has been illustrated mainly by means of particular examples rather than broad requirements.''
Indeed, we contend that until recently fiducial inference was applied to relatively a small class of parametric problems only.

Since the mid 2000s, there has been a renewed interest in modifications of fiducial inference.
\cite{Hannig2009, Hannig2013} bring forward a mathematical definition of what they call the Generalized Fiducial Distribution (GFD).  Having a formal definition allowed fiducial inference to be applied to a wide variety of statistical settings \citep{HannigIyerWang2007, WangIyer2005, WangIyer2006a, WangIyer2006b,WangHannigIyer2012b, HannigIyerQingWangLiu2018, CisewskiHannig2012, WandlerHannig2012b,HannigLee2009,LaiHannigLee2014,liu2017generalized}.

Other related approaches include Dempster-Shafer theory \citep{Dempster2008, EdlefsenLiuDempster2009}, inferential models \citep{martin2015inferential}, and
 confidence distributions \citep{XieSingh2013, schweder2016confidence, HjortSchweder2018}.  Objective Bayesian inference, which aims at finding non-subjective model based priors can also be seen as addressing the same basic question. Examples of recent breakthroughs related to reference prior and model selection are  \citet{BayarriEtAl2012, BergerBernardoSun2009, BergerBernardoSun2012}. There are many more references that interested readers can find in the review article \cite{hannig2016generalized}.


In this paper, we apply the fiducial approach in the context of survival analysis. To our knowledge, this is the first time fiducial inference has been systematically applied to an infinite-dimensional statistical problem. However, for use of confidence distributions to address some basic non-parametric problems see Chapter 11 of \cite{schweder2016confidence}.
In this manuscript, we propose a computationally efficient algorithm to sample from the GFD, and use the samples from the GFD to construct statistical procedures. The median of the GFD could be considered as a substitution for the Kaplan-Meier estimator \citep{kaplan1958nonparametric}, which is a classical estimator in survival analysis. Appropriate quantiles of the GFD evaluated at a given time provide pointwise confidence intervals for survival function. Similarly, the confidence intervals for quantiles of survival functions can be obtained by inverting the GFD.

The proposed pointwise confidence intervals maintain coverage in situations where classical confidence intervals often have coverage problems \citep{fay2013pointwise}. \cite{fay2013pointwise,fay2016finite} construct solutions to avoid these coverage problems.
 It is interesting to note that the conservative version of the proposed pointwise fiducial confidence interval is equivalent to beta product confidence procedure confidence interval of \cite{fay2013pointwise}. The other fiducial confidence interval proposed in this paper is based on log-linear interpolation and has the shortest length among all existing methods which maintain coverage.

We also construct curvewise confidence intervals for survival functions. Based on the curvewise confidence intervals, we propose a two sample test for testing whether two survival functions are equal. The proposed test does not need the proportional hazard assumption \citep{bouliotis2011crossing}, and appears to be a good replacement for the log-rank test and sup log-rank test. 

We establish an asymptotic theory which verifies the frequentist validity of the proposed fiducial approach. In particular, we prove a functional Bernstein--von Mises theorem for the GFD in Skorokhod's $D[0,t]$ space. Because randomness in GFD comes from two distinct sources the proof of this results is different from the usual proof of asymptotic normality for the Kaplan-Meyer estimator. As a consequence of the functional Bernstein--von Mises theorem, the proposed  pointwise and curvewise confidence intervals provide asymptotically correct coverage, and the proposed survival function estimator is asymptotically equivalent to the Kaplan-Meier estimator. 

We report results of a simulation study showing the proposed fiducial methods provide competitive, and in some cases superior performance to the methods in the literature. In particular, we compare the performance of the GFD intervals with classical confidence intervals like Greenwood \citep{survival-package}, Borkowf \citep{borkowf2005simple}, Strawderman-Wells \citep{strawderman1997accurate,strawderman1997accurateb}, nonparametric bootstrap \citep{efron1981censored,akritas1986bootstrapping}, constrained bootstrap \citep{barber1999symmetric}, Thomas-Grunkemeier method \citep{thomas1975confidence}, constrained beta \citep{barber1999symmetric}, and beta product confidence procedure \citep{fay2013pointwise,fay2016finite} in various settings with small samples and/or heavy censoring. Additionally we also consider the setting of \cite{barber1999symmetric} in which the data contains fewer censored observations. Next, we report several scenarios showing the desirable power of the GFD test in comparison to 12 different types of log-rank tests implemented in the R package survMisc \citep{dardis2016survmisc}: original log-rank test \citep{mantel1966evaluation}; Gehan-Breslow generalized Wilcoxon log-rank test \citep{gehan1965generalized}; Tarone-Ware log-rank test \citep{tarone1977distribution}; Peto-Peto log-rank test \citep{peto1972asymptotically}; Modified Peto-Peto log-rank test \citep{andersen1982cox}; Fleming-Harrington log-rank test \citep{harrington1982class} and corresponding supremum versions \citep{fleming1987supremum,eng2005sample}.

We apply the proposed fiducial method to test the difference between chemotherapy and chemotherapy combined with radiotherapy in the treatment of locally unresectable gastric cancer \citep{klein2005survival}. The proposed fiducial test has the smallest p-value compared to existing methods. We also report a small simulation study based on 500 synthetic datasets mimicking the cancer data. The proposed fiducial test is more powerful than the 12 different tests described above.


\section{Methodology}\label{s:Methodology}

\subsection{Fiducial approach explained}\label{nonsur}
In this section, we explain the definition of a generalized fiducial distribution. We demonstrate the definition on the problem of estimating survival functions when no censoring is present.
We start by expressing the relationship between the data $\bY$ and the parameter $\btheta$ using
\begin{equation}\label{eq:StructuralEq}
 \bY = \bG(\bU,\btheta),
\end{equation}
where $\bG(\cdot,\cdot)$ is a deterministic function termed the data generating equation, and $\bU$ is a random vector whose distribution is independent of $\btheta$ and completely known.
Data $\bY$ could be simulated by generating a random variable $\bU$ and plugging it into the data generating equation \eqref{eq:StructuralEq}.
For example, a data generating equation for the $N(\mu,\sigma^2)$ model is  $Y_i=G(U_i,\mu,\sigma)=\mu + \sigma \Phi^{-1}(U_i)$, where $\bU=(U_1,\cdots,U_n)$ are independent and identically distributed $U(0,1)$ and $\Phi(y)$ is the distribution function of the standard normal distribution.

The inverse cumulative distribution function method for generating random variables provides a common data generating equation for a nonparametric independent and identically distributed model:
\begin{equation}\label{eq:StructuralEq1}
  Y_i = G(U_i,F) = F^{-1} (U_i),\quad i=1, \ldots, n,
\end{equation}
where  $F^{-1}(u)=\inf\{y\in \mathbbm{R}: F(y)\geq u\}$ is the usual ``inverse'' of the distribution function $F(y)$\citep{CasellaBerger2002}. Notice that the distribution function $F$ itself is the parameter $\btheta$ in this infinite dimensional model. The actual observed data is generated using the true distribution function $F_0$.

Roughly speaking a GFD is obtained by inverting the data generating equation, and \cite{hannig2016generalized} proposes a very general definition of GFD. However, in order to simplify the presentation, we will use an earlier, less general version found in \cite{Hannig2009}. The two definitions are equivalent for the models considered here.

We start by denoting the inverse image of the data generating equation \eqref{eq:StructuralEq} by
\[
 Q(\by,\bu)=\{\btheta\,:\,  \by = \bG(\bu,\btheta)\}.
\]
For the special case \eqref{eq:StructuralEq1} the inverse image  is
\begin{equation}\label{eq:QP1}
Q(\by,\bu)=\bigcap_{i=1}^n\{F: F(y_i)\geq u_i, F(y_i-\epsilon)< u_i ~\text{for any}~ \epsilon>0\}.
\end{equation}

If $\by$ is the observed data and $\bu_0$ the value of the random vector $\bU$ that was used to generate it, then we are guaranteed that the true parameter value $\btheta_0\in Q(\by,\bu_0)$. However, we only know a distribution of $\bU$ and not the actual value $\bu_0$. Notice that $\by=\bG(\bu_0,\btheta_0)$ and therefore only values of $\bu$ for which $Q(\by,\bu)\neq\emptyset$ should be considered. Let $\bU^*$ be another random variable independent of and having the same distribution as $\bU$.  Since the conditional distribution of
$\bU^* \mid  \{ Q({\by},\bU^*)\neq\emptyset\}$ can be viewed as summarizing our knowledge about $\bu_0$, the conditional distribution of
\begin{equation}\label{eq:FiducialDef}
  Q(\by,\bU^*) \mid \{ Q({\by},\bU^*)\neq\emptyset\}
\end{equation}
can be viewed as summarizing our knowledge about $\btheta_0$.

Notice that $Q(\by,\bu)$ is a set that can contain more than one element. We deal with this by selecting a representative from the closure of  $Q(\by,\bu)$. The distribution of a representative selected from \eqref{eq:FiducialDef} is a GFD.
Based on the theoretical results presented, the non-uniqueness caused by this somewhat arbitrary choice disappears asymptotically. A possible conservative alternative to selecting a single representative from $Q(\by,\bu)$ could use the theory of belief functions \citep{Dempster2008, shafer1976mathematical}.

To describe the GFD in the particular case of \eqref{eq:StructuralEq1} we define for all $s\geq 0$, $F^L_{(\by,\bu)}(s)=\inf\{F(s):F\in Q(\by,\bu)\}$ and $F^U_{(\by,\bu)}(s)=\sup\{F(s):F\in Q(\by,\bu)\}$. The closure of the inverse image \eqref{eq:QP1} is a set of all distribution functions $F$ that stay between $F^L_{(\by,\bu)}$ and $F^U_{(\by,\bu)}$.  Also notice that $Q(\by,\bu)$ is not empty if and only if the order of $\bu$ matches the order of $\by$, with the understanding that in the case of ties in $\by$, the $u_i$'s corresponding to the ties could be any order.

By exchangeability, the conditional distribution $\bU^* \mid \{Q(\by,\bU^*)\neq\emptyset\}$ is the same as  the distribution of $\bU^*_{[\by]}$, where $\bU^*_{[\by]}$ is the independent and identically distributed U(0,1) reordered to match the order of $\by$. Thus, any distribution stochastically larger than $F^L_{(\by,\bU^*_{[\by]})}$ and stochastically smaller than $F^U_{(\by,\bU^*_{[\by]})}$ is a GFD. 
Sampling from this fiducial distribution is easy to implement. 

We consider the following 2 main options in using the GFD for inference. The first option is to construct conservative confidence sets. For example, when designing pointwise confidence intervals for the survival function at time $s$,  
we use quantiles of $1-F^U_{(\by,\bU^*_{[\by]})}(s)$  for lower bounds and  quantiles of $1-F^L_{(\by,\bU^*_{[\by]})}(s)$  for upper bounds.

The second option is to select a suitable representative of  $Q{(\by,\bU^*_{[\by]})}$. When there are no ties present in the data we propose to fit a continuous distribution function by using linear interpolation for the survival function on the log scale, i.e., the distribution function $F^I_{(\by,\bu)}(s)=1-e^{L(s)}$, where $L(s)$ is the linear interpolation between $(0,0),(y_{(1)},\log u_{(1)}),\ldots,(y_{(n)},\log u_{(n)})$, and on the interval $(y_{(n)},\infty)$ we extrapolate by extending the line between $(y_{(n-1)},\log u_{(n-1)})$ and $(y_{(n)},\log u_{(n)})$. We will call this the log-linear interpolation.

As usually, we denote the GFD for survival functions $S^L_{(\by,\bu)}=1-F^U_{(\by,\bu)}$, $S^U_{(\by,\bu)}=1-F^L_{(\by,\bu)}$, and $S^I_{(\by,\bu)}=1-F^I_{(\by,\bu)}$. For simplicity, hereinafter we omit the subindex $(\by,\bu)$.
In the rest of this paper we will also denote Monte Carlo samples of the lower bound, the upper bound, and the log-linear interpolation of the GFD for the survival function  by $S_i^L,S_i^U,$ and $S_i^I~(i=1,\ldots, m)$, respectively.

To demonstrate the fiducial distribution of this section, we draw 300 observations from $Weibull(20,10)$. 
Based on this data, we plot a fiducial sample of survival functions $S^I_i (i=1,\ldots,1000$) and the empirical survival function in the left panel of Figure \ref{both}.

\subsection{Fiducial approach in survival setting}\label{sur}
In this section, we derive the GFD for the failure distribution based on right censored data. Here we treat the situation when the failure and censoring times are independent. The same GFD is derived under a more general model that includes dependence between failure and censoring times in the Appendix.

Let failure times $X_i~(i=1,\ldots,n)$ follow the true  distribution function $F_0$ and censoring times $Z_i~(i=1,\ldots,n)$ have the distribution function $R_0$. We observe partially censored data $\{y_i,\delta_i\}$ $(i=1,\ldots n)$, where $y_i=x_i\wedge z_i$ is the minimum of $x_i$ and $z_i$, $\delta_i=I\{x_i\leq z_i\}$ denotes censoring indicator.

We consider the following data generating equation,
\begin{align}\label{eq:newStructuralEq}
Y_i=F^{-1}(U_i)\wedge R^{-1}(V_i),\quad \delta_i=I\{F^{-1}(U_i)\leq R^{-1}(V_i)\},
\end{align}
where $U_i, V_i$ are independent and identically distributed $U(0,1)$ and the actual observed data were generated using $F=F_0$ and $R=R_0$. We are committing a slight abuse of notation as $\bY$ in Equation~\eqref{eq:StructuralEq} is $(\bY,\bdelta)$ in Equation~\eqref{eq:newStructuralEq} and $\bU$ in Equation~\eqref{eq:StructuralEq} is $(\bU,\bV)$ in Equation~\eqref{eq:newStructuralEq}.

For a failure event $\delta_i=1$, we have full information about failure time $x_i$, i.e., $x_i=y_i$, and partial information about censoring time $z_i$, i.e., $z_i\geq y_i$. In this case, just as in the previous section,
\begin{align*}
F^{-1}(u_i)=y_i \quad & \text{if and only if} \quad  F(y_i)\geq u_i, F(y_i-\epsilon)< u_i ~\text{for any}~ \epsilon>0.
\end{align*}

For a censored event $\delta_i=0$, we know only partial information about $x_i$, i.e., $x_i > y_i$, and full information on $z_i$, i.e., $z_i= y_i$. Similarly,
\begin{align*}
F^{-1}(u_i)> y_i \quad & \text{if and only if} \quad F(y_i)< u_i,\\
R^{-1}(v_i)=y_i \quad & \text{if and only if} \quad  R(y_i)\geq v_i, R(y_i-\epsilon)< v_i ~\text{for any}~ \epsilon>0.
\end{align*}

To obtain the inverse map, we start by inverting a single observation. If $\delta_i=1$, the inverse map for this datum is
\[
Q^{F,R}_1(y_i,u_i,v_i)
=\{F: F(y_i)\geq u_i, F(y_i-\epsilon)< u_i ~\text{for any}~ \epsilon>0\}\times\{R:R^{-1}(v_i)\geq y_i\}.
\]
If $\delta_i=0$, the inverse map is 
\[
Q^{F,R}_0(y_i,u_i,v_i)
=\{F:F(y_i)< u_i\}\times \{R: R(y_i)\geq v_i, R(y_i-\epsilon)< v_i ~\text{for any}~ \epsilon>0\}.
\]
Combining these  we obtain the complete inverse map
\begin{equation}\label{eq:QP2}
Q^{F,R}(\by,\bdelta,\bu,\bv)=\bigcap_i Q^{F,R}_{\delta_i}(y_i,u_i,v_i)=
Q^{F}(\by,\bdelta,\bu)\times Q^{R}(\by,\bdelta,\bv),
\end{equation}
where
\begin{equation}\label{eq:QP2F}
Q^F(\by,\bdelta,\bu)=\left\{F:
\begin{cases}
  F(y_i)\geq u_i,  F(y_i-\epsilon)< u_i ~\text{for any}~ \epsilon>0 & \mbox{for all  $i$ such that $\delta_i=1$}\\
  F(y_j)< u_j                         &  \mbox{for all  $j$ such that $\delta_j=0$}
 \end{cases}
  \right\} ,
\end{equation}
and $Q^R(\by,\bdelta,\bv)$ is analogous. Notice that the inverse of $Q^{F,R}$ in (6) is in the form of a Cartesian product. This is a direct consequence of our choice of data generating equation, and it greatly simplifies the calculation of marginal fiducial distribution for failure times.

To demonstrate the inverse \eqref{eq:QP2F}, Figure~\ref{explain} presents the survival function representation of $Q^F(\by,\bdelta,\bu)$ for one small data set  $(n=8)$ of  $X\sim Weibull(20,10)$ censored by $Z\sim Exp(20)$, and two different values of $\bu$.  The circle points denote failure observations and the triangle points denote censored observations. Any survival function lying between the upper and the lower bounds is an element of the closure of $Q^F(\by,\bdelta,\bu)$. In particular, we plot the log-linear interpolation going through the failure observations as described in Section~\ref{nonsur} with a modification to ensure it satisfies the lower fiducial bound.  Notice that the upper fiducial bound changes at the failure times only, while the lower fiducial bound changes at all failure times and at some censoring times depending on the value of $\bu$. 
\begin{figure}[t]\centering
\includegraphics[width=60mm]{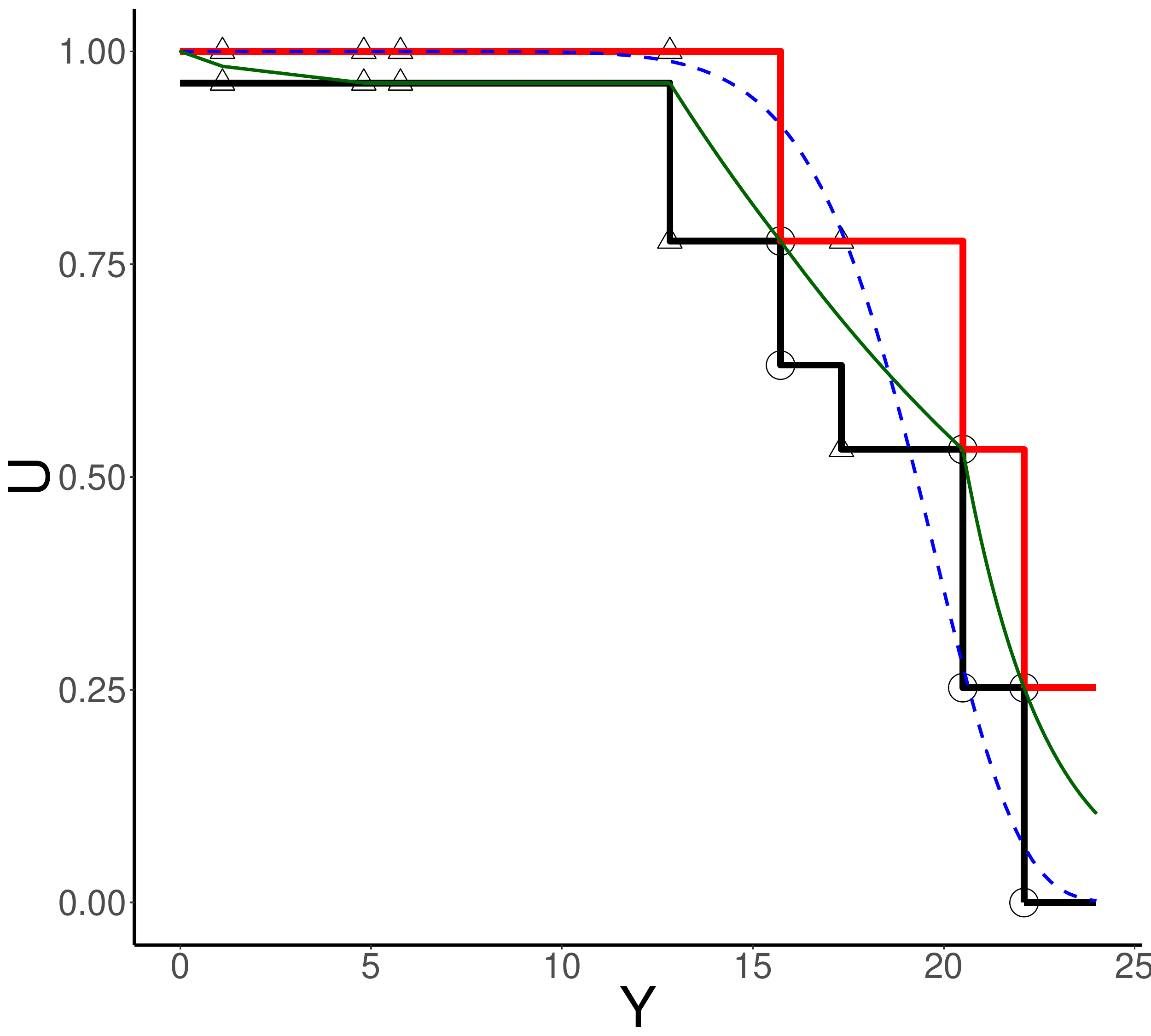}
\includegraphics[width=60mm]{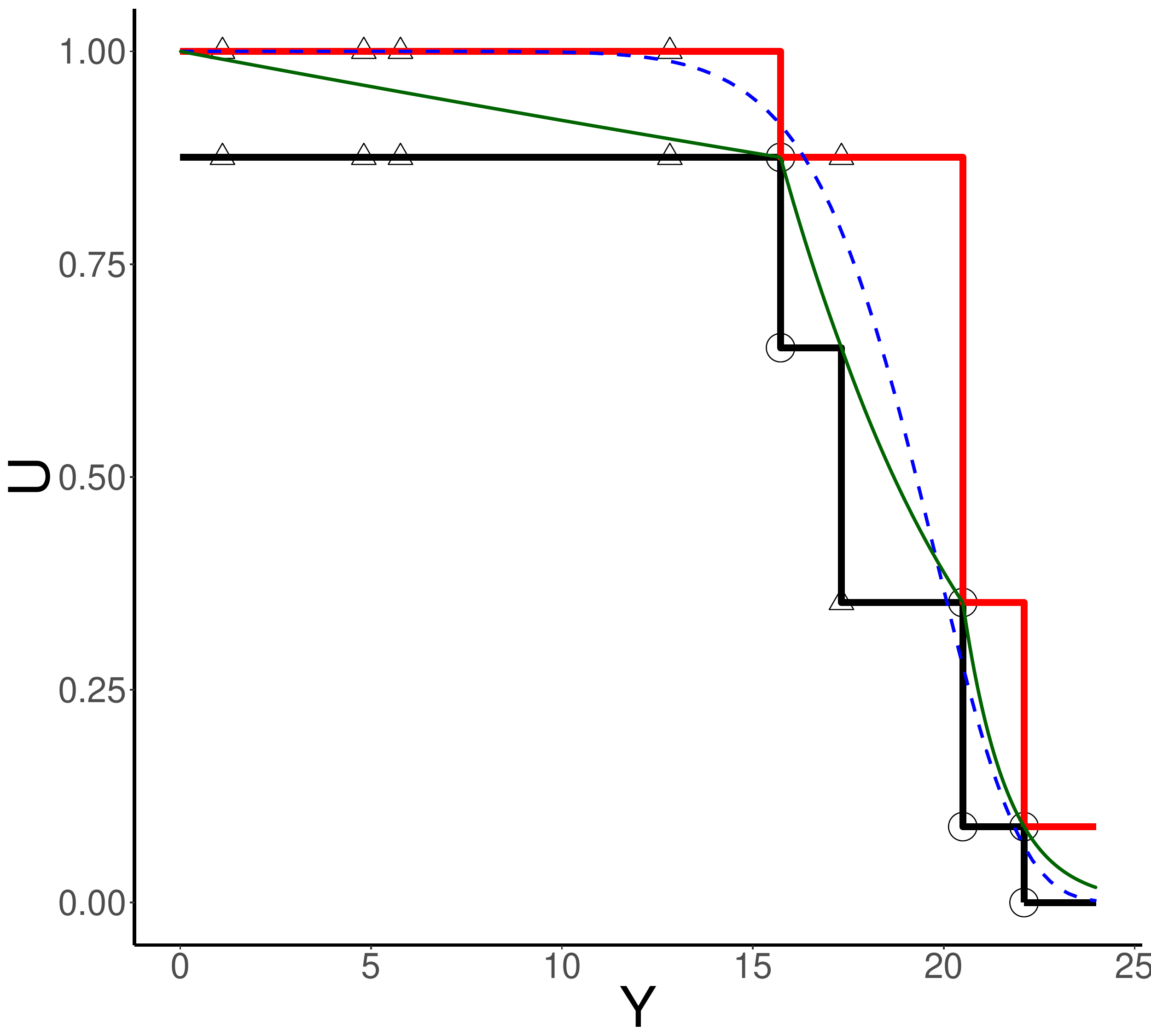}
\caption{Two realizations of fiducial curves for a sample of size $8$ from $Weibull(20,10)$ censored by $Exp(20)$. Here fiducial curves refer to Monte Carlo samples $S^L_i$, $S^U_i$, and $S^I_i$ ($i=1,2$) from the GFD.} The red curve is an upper bound and the black curve is a lower bound. The green curve is the log-linear interpolation. The circle points denote failure observations. The triangle points denote censored observations. The dashed blue curve is the true survival function of $Weibull(20,10)$. Since the fiducial distribution reflects uncertainty we do not expect every fiducial curve to be close to the true survival function.
\label{explain}
\end{figure}

When defining the GFD, let $(\bU^*,\bV^*)$ be independent of and having the same distribution as $(\bU,\bV)$.
Because of  the way the inverse \eqref{eq:QP2} separates and the fact that  $\bU^*$ and $\bV^*$ are independent, the (marginal) fiducial distribution for the failure distribution function $F$ is
\begin{equation}\label{eq:GFDsurv}
Q^F(\by,\bdelta,\bU^*) \mid \{Q^F(\by,\bdelta,\bU^*) \neq \emptyset\}.
\end{equation}
The conditional distribution of  $\bU^* \mid \{Q(\by,\bdelta,\bU^*)\neq\emptyset\}$ can be sampled efficiently because it is the distribution of a particular random reordering  of a sample of independent and identically distributed $U(0,1)$.
To this end we define $\mathcal P$ as the set of all permutations for which the permuted order statistics $\bu_{(\Pi)}, \Pi\in\mathcal P$  satisfy $Q^F(\by,\bdelta, \bu_{(\Pi)})\neq\emptyset$. Notice that the $i$-th element of $\bu_\Pi$ is the $\Pi(i)$-th order statistics of $\bu$, i.e., ${\bu_{(\Pi)}}_i=\bu_{(\Pi(i))}$.  The set $\mathcal{P}$ is invariant to $\bu$ as long as $\bu$ has no ties.  Therefore we simulate independent and identically distributed $U(0,1)$, sort them, and then permute them using a permutation selected at random from $\mathcal P$.

The random permutation $\Pi\in\mathcal P$ can be generated sequentially starting from the smallest among the $\by$ to the largest. We start with the set $\mathcal N=\{1,\ldots, n\}$. At any given observation $y_i$, we select $\Pi(i)$ from $\mathcal N$ as either a) the smallest remaining value if the observed value $y_i$ is a failure time or b) any of the remaining values selected at random if the observed value $y_i$ is a censoring time.
We then remove the selected $\Pi(i)$ from $\mathcal N$  and proceed to the next smallest observation $y_{j}$ until we exhaust the observations and $\mathcal N$. 

 Given $\{Q^F(\by,\bdelta,\bU^*) \neq \emptyset\}$, and the results of the first $i-1$ steps, the components of $\bU^{\ast}$ not yet selected are exchangeable, which validates the proposed algorithm.

The details of this algorithm are in the Appendix. 
We implement the same two basic approaches to deriving statistical procedures from the GFD as in  Section~\ref{nonsur}.
To illustrate the fiducial distribution in the right censoring case, failure time $X$ follows  $Weibull(20,10)$ and censoring time $Z$ follows $Exp(20)$ with sample size 300. Censoring percentage is about 60\%. We plot a fiducial sample of the survival function $S^I_i (i=1,\ldots,1000)$ and Kaplan-Meier estimator in the right panel of Figure~\ref{both}. As expected, we see a wider spread of fiducial curves in the censoring case indicating higher uncertainty.
\begin{figure}[t]\centering
\includegraphics[width=60mm]{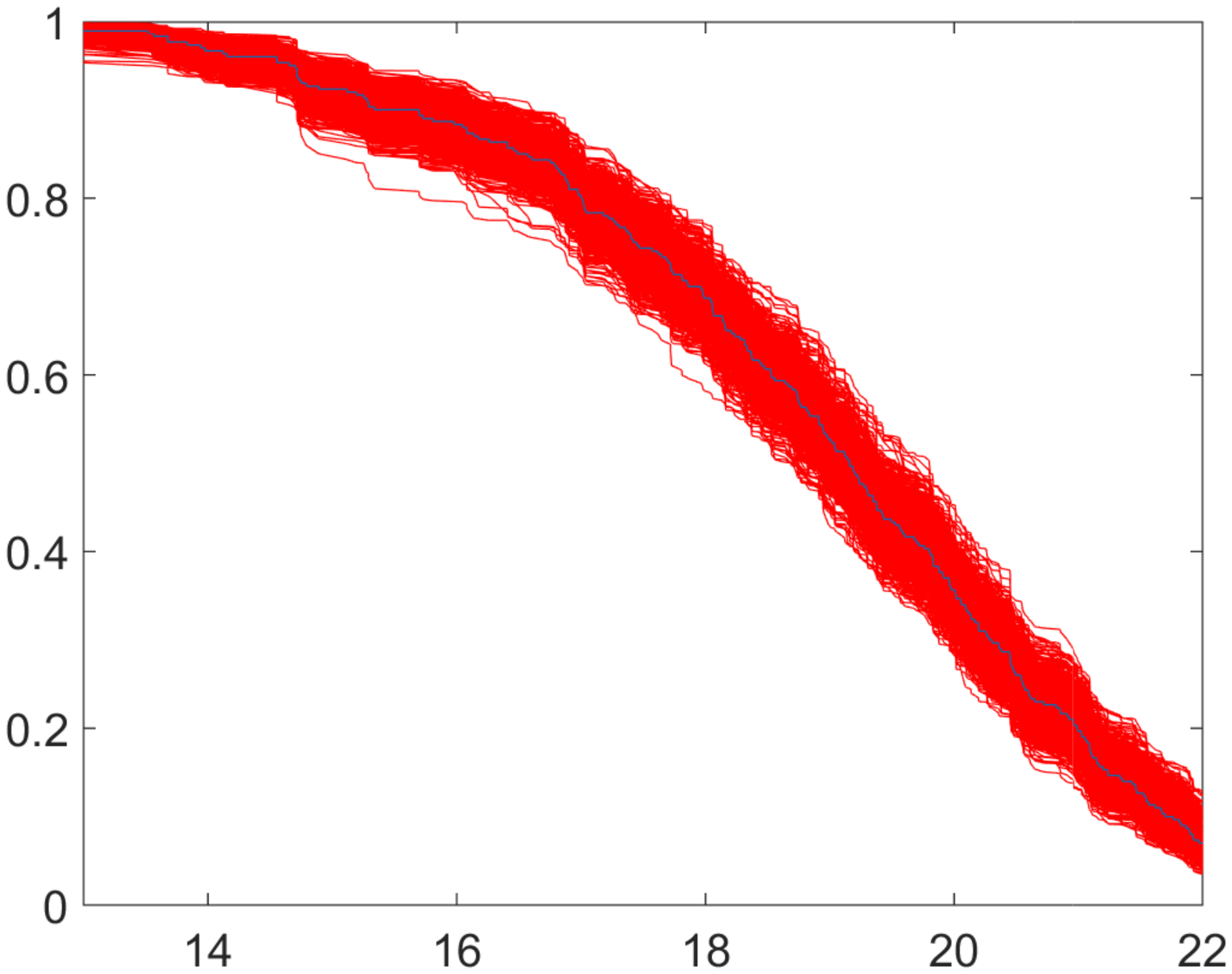}
\includegraphics[width=60mm]{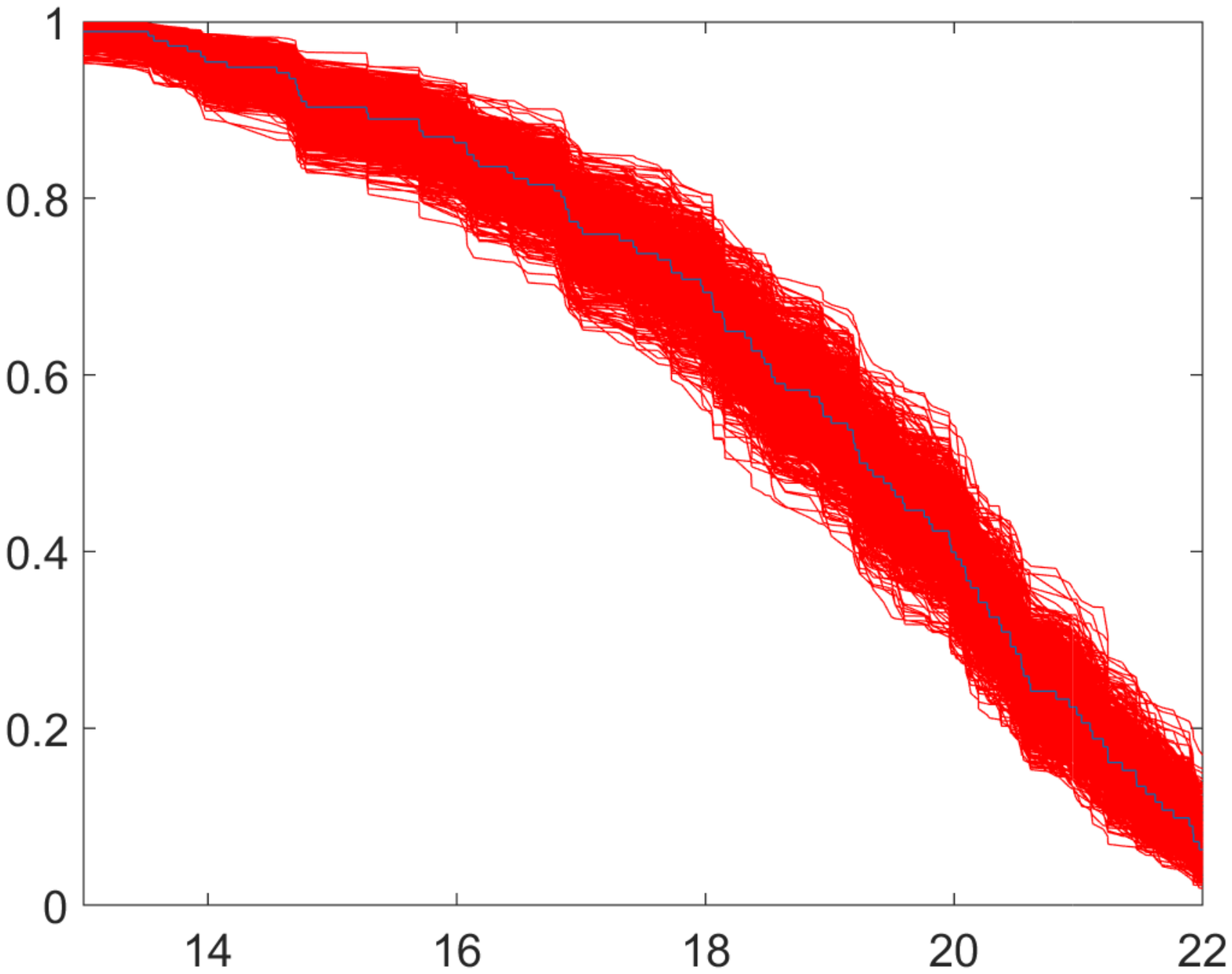}
\caption{ A plot of Monte Carlo realizations $S^I_i (i=1,\ldots,1000)$ sampled from the GFD based on a sample of 300 uncensored $Weibull(20,10)$ observations, 
and the same 300 $Weibull(20,10)$ observations censored by $Exp(20)$. The red curves are  the 1000 fiducial curves, and the blue curve are the empirical survival function and the Kaplan-Meier estimator, respectively. As expected, we observe higher uncertainty in the fiducial sample under censoring.}
\label{both}
\end{figure}

\subsection{Inference based on fiducial distribution}\label{sec3}

In this section, we describe how to use GFD for inference, specifically, point estimation, pointwise confidence intervals for survival functions and quantiles, curvewise confidence intervals, and testing. The actual numerical implementation will be based on a fiducial sample of survival functions $S_i^{L},S_i^{U}$, and $S_i^{I}~(i=1,\ldots,m)$, i.e., the lower bound, the upper bound, and the log-linear interpolation respectively, obtained from the algorithm for generating Monte Carlo samples from the GFD described in the Appendix.

By Lemma \ref{a:lemma2} shown in the Appendix, the Kaplan-Meier estimator falls into the interval given by the expectation of the lower and upper fiducial bounds at any failure time $t$. However, instead of using 
 the Kaplan-Meier estimator we propose to use the pointwise median of the log-linear interpolation fiducial distribution as a point estimator of the survival function. It follows from Section~\ref{TR} that the proposed estimator is asymptotically equivalent to the Kaplan-Meier estimator.  Numerically, we estimate the median of the GFD at time $x$ by computing a pointwise median of the fiducial sample $S_i^{I}(x)~(i=1,\ldots,m)$.  We report a simulation study in Section~\ref{CIsimu} to support this estimator.

As explained at the end of Section~\ref{nonsur} we use two types of pointwise confidence intervals, conservative and log-linear interpolation, using quantiles of appropriate parts of the fiducial samples. For example, a $95\%$ confidence log-linear interpolation confidence interval for $S(x)$ is formed by using the empirical 0$\cdot$025 and 0$\cdot$975 quantiles of $S_i^{I}(x)$. Similarly, a $95\%$ conservative confidence interval is formed by taking the empirical 0$\cdot$025 quantile of $S_i^{L}(x)$ as a lower limit and the empirical 0$\cdot$975 quantile of $S_i^{U}(x)$ as an upper limit.
Simulation results in Section~\ref{CIsimu} show that the proposed confidence intervals match or outperform their main competitors regarding coverage and length.

In order to save space, in the rest of this section we present procedures based on the log-linear interpolation sample only. A conservative version can be obtained analogously. In survival analysis, we are also interested in confidence intervals for quantile $q$ of the survival function, where $0<q<1$. We obtain such a confidence interval by inverting the procedure of computing the pointwise confidence interval. Specifically, a 95\% confidence interval is obtained by taking empirical 0$\cdot$025 and 0$\cdot$975 quantiles of the inverse of fiducial sample $S_i^{I}$ evaluated at $q$.

%

Next, we discuss the use of the GFD to obtain simultaneous curvewise confidence bands. In particular, for a $1-\alpha$ curvewise confidence set we propose using a band $\{S:\|S-M\|\leq c\}$ of fiducial probability $1-\alpha$, where $M$ denotes the pointwise median of the GFD, and $\|\cdot\|$ is the $L_{\infty}$ norm, i.e., $\|S-M\|=\max\limits_{x}|S(x)-M(x)|$. Numerically we implement this by using a fiducial sample. Let
\[
l_j=\|S^I_j-\hat M\|=\max\limits_{x}|S^I_j(x)-\hat M(x)|, j=1,\ldots, m,
\]
 where $\hat M$ is the estimated pointwise median of the GFD. Then we form the 95\% curvewise confidence band $\{S:\|S-\hat M\|\leq \hat c\}$, where $\hat c$ is the 0$\cdot$95 quantile of $l_j$. To illustrate, we plot 95\% pointwise and curvewise confidence intervals for the $Weibull(20,10)$ example under right censoring in Figure~\ref{ci}.

\begin{figure}[t]\centering
\includegraphics[width=60mm]{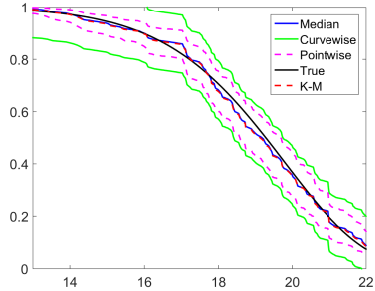}
\caption{An example of 95\% pointwise and curvewise confidence intervals of survival function by proposed log-linear interpolation approach.} 
\label{ci}
\end{figure}

The curvewise confidence set could be inverted for testing. The resulting test is different from the log-rank test \citep{mantel1966evaluation} and its modifications.
Based on our definition of  the $1-\alpha$ fiducial band, the fiducial p-value for the two sided test
\[
H_0: S(t)= S_0(t)~ \text{for all $t$},\quad H_1: S(t)\neq S_0(t)~ \text{for some $t$},
\]
 is $\text{pr}^*_{\by,\bdelta}(\|S^I-M\|\geq\|S_0-M\|)$, where $\text{pr}^*_{\by,\bdelta}$ stands for a fiducial probability computed for observed data $(\by, \bdelta)$, $S^I$ stands for a random survival function following the log-linear interpolation GFD, and as before $M$ is the pointwise median of the fiducial distribution. We estimate this p-value from a fiducial sample by finding the largest $\alpha$ for which $1-\alpha$ curvewise confidence set contains $S_0$. In particular, let
\begin{equation}\label{pvalue}
l_0=\max\limits_{x}|S_0(x)-\hat M(x)|, \quad l_j=\max\limits_{x}|S^I_j(x)-\hat M(x)|,\ j=1,\ldots,m.
\end{equation}
Numerically, we approximate the p-value by the proportion of  the fiducial sample satisfying $l_j \geq l_0 $. 

While the log-rank test is a two sided test only, the fiducial approach could also be used to define one sided tests.  For example for testing
\[
H_0: S(t)\geq S_0(t) ~ \text{for all $t$},\quad H_1: S(t)<S_0(t) ~ \text{for some $t$},
\]
we define a fiducial p-value as the fiducial probability $\text{pr}^*_{\by,\bdelta}(\max\limits_{x} \{S^I(x)-M(x)\} \geq \max\limits_{x}\{S_0(x)-M(x)\})$. 

Finally, let us consider two sample testing. For each sample, we have observed values $\by^i$ and censoring indicators $\bdelta^i$, $i=1,2$. The two independent log-linear interpolation GFDs are denoted by $S^{I}_{(\by_i,\bdelta_i)}$, $i=1,2$. When testing  $H_0:S^1-S^2=\Delta_0$ we define a fiducial p-value as the fiducial probability $\text{pr}^*_{\by,\bdelta}(\|S^{I}_{(\by_1,\bdelta_1)}-S^{I}_{(\by_2,\bdelta_2)}-M_D\|\geq \|\Delta_0-M_D\|)$, where $M_D$ is the median of the difference of the two GFDs.

Numerically, we evaluate the p-value in the same fashion as in Equation~\eqref{pvalue}. We will compare the performance of the proposed fiducial test with the log-rank test and sup log-rank test with different weights for the two sample settings by simulation in Section \ref{simulationcom2}. 

\section{Theoretical results}\label{TR}

Recall that the GFD is a data dependent distribution
$\text{pr}^*_{\by,\bdelta}$ that is defined for every fixed data set $(\by,\bdelta)$. It can be made into a random measure $\text{pr}^*_{\bY,\bdelta}$ in the same way as one defines the usual conditional distribution, i.e., by plugging random variables $(\bY,\bdelta)$ for the observed data set. In this section, we will study the asymptotic behavior of this random measure assuming there are no ties with probability 1.

\cite{praestgaard1993exchangeably} prove a Bernstein-von Mises theorem for the exchangeably weighted bootstrap, of which the Bayesian bootstrap \citep{rubin1981bayesian} is an example. 
However, the result of \cite{praestgaard1993exchangeably} is not applicable in the survival settings due to the fact that the jump sizes of $F^L$ or $F^U$ are not exchangeable. In this section, we study the theoretical properties of the GFD in the survival setting. For simplicity, we state the results in this section using upper fiducial bound of survival functions $S^U$, i.e., the lower fiducial bound of cumulative distribution functions $F^L$. Lemma~\ref{a:lemma1} in the Appendix proves that the same results hold for $S^L$ and $S^I$.

First we introduce some notations: $X_i$ is failure time, $Z_i$ is censoring time, $Y_i$ is the observed minimum of failure and censoring time, and $\delta_i=I\{X_i \leq Z_i\}$ is the censoring indicator. We define the counting process
\[
N_i(t)=I\{Y_i\leq t\}\delta_i,\quad
 \bar{N}(t)=\sum_{i=1}^n N_i(t),
\]
and the at-risk process
\[
K_i(t)=I\{Y_i \geq t\},\quad
 \bar{K}(t)=\sum_{i=1}^n K_i(t).
\]

We need the following two assumptions which are also needed for theoretical study of the Kaplan-Meier estimator \citep{fleming2011counting}.
\begin{assumption}\label{ass1} There exists a function $\pi$ such that, as $n \rightarrow \infty$,
\begin{align*}
\sup_{0 \leq t < \infty} \left|\bar{K}(t)/n-\pi(t)\right| \rightarrow 0 ~~\text{almost surely}.
\end{align*}
\end{assumption}
This assumption is very mild. For example if $Y_i$ are independent and identically distributed, it is implied by Glivenko-Cantelli Theorem; see the discussion following Assumption~6.2.1 in \cite{fleming2011counting} for more details.

\begin{assumption}\label{ass2} $F_0$ is absolutely continuous.
\end{assumption}

Let $\tilde S(t)=\prod_{s \leq t}\{1-\Delta \bar{N}(s)/\bar{K}(s)\}$ be the Kaplan-Meier estimator. It is well-known,  see for example Theorem 6.3.1 of \cite{fleming2011counting}, that for any $t$ satisfying $\pi(t)>0$,
\begin{equation}\label{KMT}
\sqrt{n} \{\tilde F(\cdot)-F_0(\cdot)\} \rightarrow \{1-F_0(\cdot)\}W\{\gamma(\cdot)\}\quad
\mbox{in distribution on $D[0,t]$,}
\end{equation}
where $\tilde F(t)=1-\tilde S(t)$, $\gamma(t)=\int_0^t\pi^{-1}(s)d\Lambda (s)$, $W$ is Brownian Motion, and $\Lambda$ is the cumulative hazard function.

Recall that the procedure for sampling from \eqref{eq:GFDsurv} in Section~\ref{sur} defines a random permutation $\Pi$. Conditional on  $\{Q^F(\by,\bdelta,\bU^*) \neq \emptyset\}$ and the results of the first $i-1$ steps, the distribution of the $\Pi(i)$-th order statistic $\bU_{(\Pi(i))}^*$ corresponding to a failure time $y_i$ is the minimum of $\bar K(y_i)$ independent random variables distributed as uniform on $(\bU_{(\Pi(j))}^*,1)$, where $\bU_{(\Pi(j))}^*$ corresponds to the failure time $y_j$ immediately preceding $y_i$. If $y_i$ is the smallest failure time  
then set $\bU_{(\Pi(j))}^*=0$.
Since $S^U(y_i)=1-\bU_{(\Pi(i))}^*$ for all failure times, the upper bound of the GFD has a distribution that can be written as
\begin{equation}\label{pest}
S^U(t)=\prod_{s_i \leq t}\{1-\Delta \bar{N}(s_i) B_i\},
\end{equation}
where $\Delta \bar{N}(t)=\bar N(t)-\bar N(t-)$, $s_i$ are ordered failure times, and $B_i$ are independent $Beta(1,\bar K(s_i))$, respectively. Its expectation $\hat{S}(t)=E\{S^U(t)\}$ can be easily computed from  \eqref{pest} as
\begin{equation}\label{KM}
\hat{S}(t)=\prod_{s\leq t}\left\{1-\frac{\Delta \bar{N}(s)}{1+\bar{K}(s)}\right\}.
\end{equation}
Equation \eqref{KM} provides us with a modification of the Kaplan-Meier estimator
that also satisfies \eqref{KMT}. We will use this modification throughout this section and
in all the proofs that can be found in the 
Appendix. As our first result, we prove a concentration inequality for $S^U(t)$.
\begin{theorem}\label{consistency}
The following bound holds for any dataset with $\bar K(t)\geq 1$ and any $\epsilon>0$,
{\small
\begin{equation}\label{eq:concentration}
\text{pr}^*_{\by,\bdelta} \{\sup_{s\leq t}{|S^U(s)-\hat{S}(s)|} \geq 3\epsilon^2/n^{1/2}+\bar{N}(t)/\bar{K}(t)^{-2} \} \leq \bar{N}(t) [ (1-\epsilon/n^{3/4})^{\bar K(t)}+\text{$0$$\cdot$$4$}^{\bar K(t)}+ n/\{\epsilon^2 \bar K(t)\}^2].
\end{equation}
}
\end{theorem}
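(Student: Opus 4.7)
The plan exploits the explicit representation \eqref{pest}, $S^U(s)=\prod_{s_i\le s}(1-B_i)$ with independent $B_i\sim\mathrm{Beta}(1,\bar K(s_i))$, so that $\hat S(s)=\prod_{s_i\le s}(1-p_i)$ with $p_i=\E B_i=1/(1+\bar K(s_i))$. The three terms on the right side of \eqref{eq:concentration} will correspond respectively to two Beta-tail probabilities (for events where some $B_i$ is atypically large) and a Doob/Kolmogorov maximal inequality applied to a centered martingale built from the $B_i-p_i$.

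First I would peel off the large-$B_i$ events. Since $\text{pr}^*_{\by,\bdelta}(B_i>c)=(1-c)^{\bar K(s_i)}\le (1-c)^{\bar K(t)}$ for every $c\in(0,1)$ and $s_i\le t$ (because $\bar K$ is non-increasing), a union bound over the $\bar N(t)$ failure times with $c=\epsilon/n^{3/4}$ and $c=3/5$ gives
\[
\text{pr}^*_{\by,\bdelta}\Bigl\{\max_i B_i>\epsilon/n^{3/4}\Bigr\}\le\bar N(t)(1-\epsilon/n^{3/4})^{\bar K(t)},\qquad \text{pr}^*_{\by,\bdelta}\Bigl\{\max_i B_i>3/5\Bigr\}\le\bar N(t)(2/5)^{\bar K(t)},
\]
which matches the first two terms. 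On the complementary good event $E$, every $B_i\le\epsilon/n^{3/4}$ and every $1-B_i\ge 2/5$, which makes a Taylor expansion of $\log(1-B_i)$ valid with bounded constants.

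Next, writing $\log S^U(s)-\log\hat S(s)=-J(s)+R(s)$, where $J(s)=\sum_{s_i\le s}(B_i-p_i)$ and $R(s)$ collects higher-order Taylor terms, I would bound $|R(s)|=O\bigl(\sum_{s_i\le s}(B_i^2+p_i^2)\bigr)\le \epsilon^2/n^{1/2}+\bar N(t)/\bar K(t)^2$ on $E$, using $B_i\le\epsilon/n^{3/4}$ and $\bar N(t)\le n$ for the first piece and $p_i\le 1/\bar K(t)$ for the second. Passing back from log-scale via $S^U-\hat S=\hat S\,(e^{\log M}-1)$ with $\hat S\le 1$ and $|e^x-1-x|\le x^2$ for $|x|\le 1$ converts this into a uniform bound $|S^U(s)-\hat S(s)+J(s)|\le 2\epsilon^2/n^{1/2}+\bar N(t)/\bar K(t)^2$ on $E$. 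Equivalently, one could proceed without logarithms by applying the product identity $\prod a_i-\prod b_i=\sum_i (a_1\cdots a_{i-1})(a_i-b_i)(b_{i+1}\cdots b_n)$ with $a_i=1-B_i$, $b_i=1-p_i$ and controlling the multiplicative weights $W_i(s)\in[0,1]$.

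Finally, the process $J(s)$ indexed by $\bar N(s)$ is a sum of independent centered random variables with step variance $\mathrm{Var}(B_i)\le 1/\bar K(s_i)^2\le 1/\bar K(t)^2$, hence a martingale with $\mathrm{Var}(J(t))\le \bar N(t)/\bar K(t)^2$. Since $S^U-\hat S$ is a step function jumping only at failure times, $\sup_{s\le t}|J(s)|=\max_k|J(s_k)|$ and Kolmogorov's (equivalently Doob's $L^2$) maximal inequality yields
\[
\text{pr}^*_{\by,\bdelta}\Bigl\{\sup_{s\le t}|J(s)|>\epsilon^2/n^{1/2}\Bigr\}\le\frac{n\,\bar N(t)}{\epsilon^4\bar K(t)^2}=\bar N(t)\cdot\frac{n}{\{\epsilon^2\bar K(t)\}^2},
\]
the third term; combining Doob's $\epsilon^2/n^{1/2}$ deviation with the deterministic $2\epsilon^2/n^{1/2}+\bar N(t)/\bar K(t)^2$ remainder by triangle inequality produces the claimed threshold $3\epsilon^2/n^{1/2}+\bar N(t)/\bar K(t)^2$. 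The main technical obstacle is the careful bookkeeping that ensures the second-step remainder is exactly $2\epsilon^2/n^{1/2}+\bar N(t)/\bar K(t)^2$; the most delicate point is returning from log-scale to the raw scale without incurring factors of $1/\hat S(t)$, which can be small when $t$ is large, while the supremum over $s$ is essentially free thanks to the piecewise-constant structure and the martingale property.
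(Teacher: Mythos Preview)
Your proposal follows essentially the same route as the paper: the product representation \eqref{pest}, the two Beta-tail union bounds (the paper also uses threshold $0{\cdot}6$, whence the $0{\cdot}4^{\bar K(t)}$ term), control of $\sum_i B_i^2$ via $\max_i B_i\le\epsilon/n^{3/4}$, and Kolmogorov's maximal inequality for the martingale $J(s)=\sum_{s_i\le s}(B_i-p_i)$.

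The one place your write-up wobbles is the passage from log-scale back to $S^U-\hat S$. Applying $|e^x-1-x|\le x^2$ with $x=-J(s)+R(s)$ ``on $E$'' requires $|x|\le 1$, but on $E$ alone $|J(s)|$ is not yet controlled (Kolmogorov has not been applied), so the claimed bound $|S^U-\hat S+J|\le 2\epsilon^2/n^{1/2}+\bar N(t)/\bar K(t)^2$ on $E$ is not justified as stated; moreover the linear term that actually emerges is $\hat S\,J$, not $J$. The paper sidesteps both issues by a sandwiching argument: on $\{\max_i B_i\le 0{\cdot}6\}$ the two-sided bound $-x-x^2\le\log(1-x)\le -x$ gives $e^{-\sum(B_i+B_i^2)}\le S^U\le e^{-\sum B_i}$ and similarly for $\hat S$, and then the $1$-Lipschitz property of $x\mapsto e^{-x}$ on $[0,\infty)$ yields
\[
|S^U(s)-\hat S(s)|\le 2\sup_{s\le t}|J(s)|+\sum_i B_i^2+\sum_i p_i^2
\]
with no a~priori smallness of $J$ required. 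The factor $2$ from the two-sided sandwich is precisely what produces the $3\epsilon^2/n^{1/2}$ in the threshold. Your telescoping-product alternative would also work cleanly and avoids logarithms entirely; it is an equally valid fix.
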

\begin{remark}\label{remark1}
Theorem~\ref{consistency}  and Assumption~\ref{ass1} imply that the fiducial distribution is uniformly consistent. In particular, provided that we have a sequence of data so that $\bar K(t)/n\to\pi(t)>0$, the right-hand side of \eqref{eq:concentration} is $O(n^{-1})$ whenever $\epsilon^2=n^{1/2}$.
\end{remark}

Before presenting our main result we need two additional assumptions.
\begin{assumption}\label{ass3} 
$\int_0^t f_n(s)/\bar{K}(s) d\bar{N}(s) \rightarrow \int_0^t f(s)\lambda(s)ds$ almost surely for any $t \in \mathcal{I}=\{t:\pi(t)>0\}$ and $f_n \rightarrow f$ uniformly.
\end{assumption}
Assumption \ref{ass3} is reasonable since the probability of failure and censoring both happening in the $[t,t+\Delta t)$ is of a higher order $O((\Delta t)^2)$.

\begin{assumption}\label{ass4}
$\sup_{0 \leq s \leq t} | \tilde F(s)- F_0(s) | \to 0$ almost surely for any $t \in \mathcal{I}=\{t:\pi(t)>0\}$, where $\tilde F=1-\tilde S$, and $\tilde S$ is the Kaplan-Meier estimator.
\end{assumption}
\begin{remark}
 The strong consistency result of Assumption \ref{ass4} has been proved for the model described in Section \ref{sur} by \cite{GCKM,10.2307/2242210}. Moreover, Assumption \ref{ass4} is only needed for establishing a strong version of Theorem \ref{main}, i.e., convergence in distribution almost surely. If the Kaplan-Meier estimator only converges in probability, then the convergence mode in Theorem \ref{main} is in distribution in probability.
\end{remark}

The following theorem establishes a Bernstein-von Mises theorem for the fiducial distribution. In particular, we will show that the fiducial distribution of $n^{1/2} \{F^L(\cdot)-\hat F(\cdot)\}$, where $\hat F(\cdot)=1- \hat S(\cdot)$ and $F^L(\cdot)=1- S^U(\cdot)$, converges in distribution almost surely to the same Gaussian process as in \eqref{KMT}. To understand the somewhat unusual mode of convergence used here, notice that there are two sources of randomness present. One is from the fiducial distribution itself that is derived from each fixed data set. The other is the usual randomness of the data. The mode of convergence here is in distribution almost surely, i.e., the centered and scaled fiducial distribution viewed as a random probability measure on $D[0,t]$ converges almost surely to the Gaussian process described in the right-hand side of Equation \eqref{KMT} using the weak topology on the space of probability measures. 

\begin{theorem}\label{main}  Based on Assumptions 
\ref{ass1}--\ref{ass4}, for any $t \in \mathcal{I}=\{t:\pi(t)>0\}$, $n^{1/2}\{ F^L(\cdot)-\hat F(\cdot)  \} \rightarrow \{1-F_0(\cdot)\}W\{\gamma(\cdot)\}$ in distribution on $D[0,t]$ almost surely, where $\gamma(t)=\int_0^t\pi^{-1}(s)d\Lambda (s)$.
\end{theorem}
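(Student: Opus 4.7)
The plan is to exploit the explicit product representation \eqref{pest} of $S^U$ together with the modified Kaplan--Meier expression \eqref{KM} of $\hat S$. Under the no-ties assumption $\Delta \bar N(s)\in\{0,1\}$, so for conditionally independent $B_i\sim\text{Beta}(1,\bar K(s_i))$ at the failure times,
\[
\log S^U(t) - \log \hat S(t) = \sum_{s_i \leq t,\,\delta_i=1}\bigl\{\log(1-B_i) - \log(1 - (1+\bar K(s_i))^{-1})\bigr\}.
\]
First, I would fix a realization of the data in the measure-one event on which Assumptions~\ref{ass1}, \ref{ass3}, \ref{ass4} and the uniform consistency of Theorem~\ref{consistency} all hold simultaneously, and thereafter treat $\bar K$ and $\bar N$ as deterministic sequences and work only with the conditional fiducial randomness from the $B_i$.

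A Taylor expansion of $\log(1-x)$ to second order, setting $\xi_i := -(B_i - (1+\bar K(s_i))^{-1})$, yields
\[
M_n(t) := n^{1/2}\{\log S^U(t) - \log \hat S(t)\} = n^{1/2}\sum_{s_i\leq t,\,\delta_i=1}\xi_i + R_n(t),
\]
where $R_n$ collects the $B_i^2$-type terms together with the deterministic second-order bias from expanding $\log(1 - (1+\bar K(s_i))^{-1})$. Each $B_i$ is of order $O_{P^*}(\bar K(s_i)^{-1}\log n)$, and $\bar K(s_i)\asymp n$ on $\{\pi>0\}$ by Assumption~\ref{ass1}, so a direct moment computation gives $\sup_{s\le t}|R_n(s)| = o_{P^*}(1)$. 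The leading sum is, conditional on the data, a martingale in $t$ with respect to the filtration generated by the $B_i$'s; its predictable quadratic variation satisfies
\[
\langle M_n\rangle(t) = \sum_{s_i\leq t,\,\delta_i=1}\frac{n\,\bar K(s_i)}{(1+\bar K(s_i))^2(2+\bar K(s_i))} \;\longrightarrow\; \int_0^t\frac{d\Lambda(s)}{\pi(s)} = \gamma(t)
\]
by Assumption~\ref{ass3} applied with $f_n(s)=n\bar K(s)^2/\{(1+\bar K(s))^2(2+\bar K(s))\}$, and the jumps of $M_n$ are uniformly of size $o_{P^*}(1)$ since $\max_i B_i = O_{P^*}((\log n)^2/n)$.

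These are exactly the hypotheses of Rebolledo's martingale central limit theorem, which yields $M_n(\cdot)\to W(\gamma(\cdot))$ in distribution on $D[0,t]$ conditional on the data, pathwise along the fixed almost-sure event. A pointwise delta-method step closes the argument: from $S^U - \hat S = \hat S\{\exp(\log S^U - \log \hat S) - 1\}$ together with $\log S^U - \log \hat S = O_{P^*}(n^{-1/2})$ and the uniform convergence $\hat S(\cdot)\to 1-F_0(\cdot)$ (from \eqref{KM} and Assumption~\ref{ass4}), one obtains
\[
n^{1/2}\{F^L(\cdot) - \hat F(\cdot)\} = -n^{1/2}\{S^U(\cdot) - \hat S(\cdot)\} \longrightarrow -(1-F_0(\cdot))\,W(\gamma(\cdot)),
\]
which has the same law as $(1-F_0(\cdot))W(\gamma(\cdot))$ by symmetry of Brownian motion.

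The principal obstacle is the interplay between the two sources of randomness in the ``in distribution almost surely'' mode: one must identify a single measure-one set of data sequences on which all of Assumptions~\ref{ass1}--\ref{ass4} and the concentration bound of Theorem~\ref{consistency} apply pathwise, and then verify on that set that both the limiting quadratic variation and the Lindeberg jump condition of Rebolledo's theorem hold deterministically, not merely in probability over data paths. A secondary step is Lemma~\ref{a:lemma1} in the Appendix, which transfers the conclusion from $F^L$ (equivalently $S^U$) to the analogous statements for $S^L$ and the log-linear interpolant $S^I$; the key observation is that $S^U$ and $S^L$ differ only at censoring times and by jumps that are $o(n^{-1/2})$ uniformly, so the limiting Gaussian process is the same.
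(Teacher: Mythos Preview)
Your proposal is correct and reaches the same conclusion, but the decomposition differs from the paper's. The paper does not pass to logarithms; instead it applies a Duhamel-type identity directly on the survival scale,
\[
S^U(s)-\hat S(s) = -\hat S(s)\int_0^s \frac{S^U(x-)}{\hat S(x)}\Bigl\{u(x)-\frac{1}{1+\bar K(x)}\Bigr\}\,d\bar N(x),
\]
and then builds a martingale $M$ with a normalization factor $\{2+\bar K(x)\}^{1/2}$ so that the stochastic integral $U(s)=\int_0^s H\,dM$ has the right predictable variation, checking the two conditions of Theorem~5.1.1 in Fleming--Harrington. Because the integrand $H$ contains the random ratio $S^U(x-)/\hat S(x)$, the paper must invoke the fiducial consistency result (Theorem~\ref{consistency}/Remark~\ref{remark1}) inside the verification of the quadratic-variation limit. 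Your route---take logs, Taylor-expand, obtain the bare centered sum $n^{1/2}\sum_{s_i\le t}\xi_i$ as the leading martingale, then undo the log via the delta method at the end---produces a conditionally independent-increment martingale with no random integrand, so the quadratic-variation check reduces to a deterministic limit handled by Assumption~\ref{ass3} alone, and Theorem~\ref{consistency} is only needed implicitly to control the remainder $R_n$. The paper's approach has the virtue of mirroring the classical Kaplan--Meier asymptotics proof line by line; yours is a bit more elementary and keeps the martingale structure simpler.
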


Notice that Theorem~\ref{main} implies that the pointwise fiducial confidence intervals are equivalent to the
asymptotic confidence intervals based on the Kaplan-Meyer estimator. This fact can be also seen from Theorem~2 of \cite{fay2013pointwise}.
This is in line with our experience with GFD in parametric settings, i.e., the fiducial procedures are asymptotically as efficient as maximum likelihood.
The following corollary shows that Theorem~\ref{main} also implies that all the pointwise and curvewise confidence intervals described in Section~\ref{sec3} have asymptotically correct coverage. Consequently, the tests described in Section~\ref{sec3} also have asymptotically correct type I error.

\begin{corollary} \label{corollary}
Let $\Psi\{\phi(\cdot)\}$ be a map: $D[0,t]\rightarrow \mathbbm{R}$ satisfying, there exists a function $\psi$ so that
\begin{equation}\label{p1}
\Psi\{\phi(\cdot)\}=\Psi\{-\phi(\cdot)\},\quad
\Psi \{ a\phi(\cdot)\}=\psi(a)\Psi\{\phi(\cdot)\},
\end{equation}
 for all $\phi \in D[0,t]$, $a>0$, the distribution of the random variable $\Psi[\{1-F_0(\cdot)\}W\{\gamma(\cdot)\}]$ is continuous and the $(1-\alpha)$-th quantile of this distribution is unique.

Then, under the assumptions in Theorem \ref{main}, any set $C_{n,\alpha}=\{F:\Psi\{F(\cdot)-\hat F(\cdot)\}\leq \epsilon_{n,\alpha}\}$ with $\text{pr}^*_{\by,\bdelta}(C_{n,\alpha})=1-\alpha$ is a $1-\alpha$ asymptotic confidence set for $F_0$.
\end{corollary}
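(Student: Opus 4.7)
The plan is to combine the fiducial Bernstein--von Mises theorem (Theorem~\ref{main}) with the classical Kaplan--Meier convergence \eqref{KMT} via the continuous mapping theorem and Slutsky's theorem. The key observation is that property \eqref{p1} lets us pull the rate $n^{-1/2}$ in and out of $\Psi$, so that both the fiducial critical value $\epsilon_{n,\alpha}$ and the frequentist deviation $\Psi\{F_0(\cdot)-\hat F(\cdot)\}$ can be rewritten in terms of quantities that converge in law to $\Psi(G)$, where $G(\cdot)=\{1-F_0(\cdot)\}W\{\gamma(\cdot)\}$.

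First I would rescale. Set $L_n(\cdot)=n^{1/2}\{F^L(\cdot)-\hat F(\cdot)\}$. By the homogeneity in \eqref{p1},
\[
\Psi\{F^L(\cdot)-\hat F(\cdot)\}=\psi(n^{-1/2})\,\Psi(L_n),
\]
where $\psi(n^{-1/2})>0$ in any nontrivial case since $\psi$ is multiplicative on $(0,\infty)$ with $\psi(1)=1$. Consequently the rescaled critical value $\tilde\epsilon_{n,\alpha}:=\epsilon_{n,\alpha}/\psi(n^{-1/2})$ is the $(1-\alpha)$-th fiducial quantile of $\Psi(L_n)$. By Theorem~\ref{main}, $L_n\to G$ in distribution almost surely on $D[0,t]$, and because $G$ has continuous sample paths (Brownian motion composed with the continuous $\gamma$), the continuous mapping theorem applied on a probability-one event in the data yields $\Psi(L_n)\to\Psi(G)$ in distribution almost surely. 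Since the limit law is continuous with unique $(1-\alpha)$-th quantile $q_{1-\alpha}$, the usual quantile-convergence argument gives $\tilde\epsilon_{n,\alpha}\to q_{1-\alpha}$ almost surely.

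For the frequentist side, using the symmetry $\Psi(\phi)=\Psi(-\phi)$ together with the same homogeneity,
\begin{align*}
\text{pr}(F_0\in C_{n,\alpha})
&=\text{pr}\bigl\{\Psi(\hat F(\cdot)-F_0(\cdot))\leq \epsilon_{n,\alpha}\bigr\}\\
&=\text{pr}\bigl\{\Psi(n^{1/2}(\hat F(\cdot)-F_0(\cdot)))\leq \tilde\epsilon_{n,\alpha}\bigr\}.
\end{align*}
As noted after \eqref{KM}, the modified estimator $\hat F$ differs from the Kaplan--Meier $\tilde F$ by a uniformly $O(n^{-1})$ term, so $n^{1/2}\{\hat F(\cdot)-F_0(\cdot)\}\to G$ in distribution by \eqref{KMT}, and continuous mapping gives $\Psi[n^{1/2}\{\hat F(\cdot)-F_0(\cdot)\}]\to\Psi(G)$ in distribution. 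Combining this with $\tilde\epsilon_{n,\alpha}\to q_{1-\alpha}$ almost surely via Slutsky's theorem, and using continuity of the law of $\Psi(G)$ at $q_{1-\alpha}$ to pass the indicator through the limit, produces $\text{pr}(F_0\in C_{n,\alpha})\to \text{pr}\{\Psi(G)\leq q_{1-\alpha}\}=1-\alpha$.

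The most delicate step I expect is the convergence of the data-dependent fiducial quantile $\tilde\epsilon_{n,\alpha}$ in the almost-sure-in-distribution mode. This requires treating the fiducial laws as random probability measures on $D[0,t]$ and arguing that, on a probability-one event in the data, weak convergence of these measures together with continuity of $\Psi$ at elements of $C[0,t]$ is enough to transfer to convergence of the scalar $(1-\alpha)$-th quantile. Continuity of $\Psi$ must be checked case by case; for the sup-norm bands used in Section~\ref{sec3} this is immediate, and the homogeneity in \eqref{p1} rules out any pathological rescaling behavior.
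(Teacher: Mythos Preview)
Your proposal is correct and follows essentially the same route as the paper: rescale by $n^{1/2}$ using the homogeneity in \eqref{p1}, apply the continuous mapping theorem to both Theorem~\ref{main} and \eqref{KMT}, and identify the rescaled fiducial critical value with the limiting $(1-\alpha)$-th quantile of $\Psi(G)$. Your $\tilde\epsilon_{n,\alpha}=\epsilon_{n,\alpha}/\psi(n^{-1/2})$ coincides with the paper's $\psi(n^{1/2})\epsilon_{n,\alpha}$ by multiplicativity of $\psi$, and you are somewhat more explicit than the paper about the quantile-convergence step, the use of Slutsky, and the (implicit) continuity assumption on $\Psi$.
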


\section{Simulation study}\label{s:Simul}

\subsection{Coverage of pointwise confidence intervals and mean square error of point estimators}\label{CIsimu}
We present comparisons of frequentist properties of the proposed fiducial confidence intervals with a number of competing methods. We will consider two basic groups of settings, one with heavy censoring from  \cite{fay2013pointwise} and another with a moderate level of censoring from \cite{barber1999symmetric}. In both cases the proposed GFD intervals perform comparable to or better than the reported methods.

First we reproduce the settings in \cite{fay2013pointwise} that have a very high level of censoring. \cite{fay2013pointwise} compared their proposed beta product confidence procedure methods with a number of asymptotic methods. These include  Greenwood by logarithm transformation, the confidence interval on the Kaplan-Meier estimator using Greenwood's variance by logarithm transformation \citep{survival-package}; Modified Greenwood by logarithm transformation which modifies the estimator of variance for the lower limit by multipling the Greenwood's variance estimator by $K(y_i)/K(t)$ at $t$, where $y_i$ is the largest observed survival less than or equal to $t$ \citep{survival-package}; Borkowf by logarithm transformation, which gives wider intervals with more censoring and assumes normality on $\log(\tilde S(t))$, where $\tilde S(t)$ is the Kaplan-Meier estimator \citep{borkowf2005simple}; shrinkage Borkowf by logarithm transformation, which uses a shrinkage estimator of the Kaplan-Meier estimator with a hybrid variance estimator \citep{borkowf2005simple}; Strawderman-Wells, that uses the Edgeworth expansion
for the distribution of the studentized Nelson-Aalen estimator \citep{strawderman1997accurate,strawderman1997accurateb}; Thomas-Grunkemeier, a likelihood ratio method which depends on a constrained product-limit estimator of the survival function \citep{thomas1975confidence}; Constrained Beta, which refers the distribution of $\tilde S(t)$ to a beta distribution subject to some constraints \citep{barber1999symmetric}; nonparametric Bootstrap
\citep{efron1981censored,akritas1986bootstrapping}; Constrained~Bootstrap, an improved bootstrap approximation subject to some constraints \citep{barber1999symmetric}.

Simulation studies in \cite{fay2013pointwise} show that the above asymptotic methods have a coverage problem, i.e., the error rate of 95\% confidence interval of all these methods is larger than 5\% in their high censoring scenarios.
Therefore in this setting we focus on comparing the fiducial methods with our main competing methods, which are beta product confidence procedure \citep{fay2013pointwise}, mid-p beta product confidence procedure \citep{fay2016finite}, see also Chapter 11 of \cite{schweder2016confidence}, and Binomial-C \citep{clopper1934use}, which maintain the coverage. We report the error rate of coverage and the average width of confidence intervals for fiducial methods, beta product confidence procedure using method of moment, beta product confidence procedure using Monte Carlo with samples 1000, mid-p beta product confidence procedure, and Binomial-C. We point out that Clopper-Pearson Binomial-C requires knowledge of the censoring times for each individual \citep{fay2013pointwise}.

We consider following two scenarios in \cite{fay2013pointwise}. In the first scenario, failure time $X$ is $Exp(10)$, censoring time $Z$ is $U(0,5)$. 
We simulate 100000 independent datasets of size 30 and applied our methods with fiducial sample size 1000. In the second scenario, 
we reproduce the setting using a mixture of exponentials to mimic the pilot study of treatment in severe systemic sclerosis \citep{nash2007high}. In particular, failure time $X$ is a mixture of $Exp$(0$\cdot$227) with probability 0.187 and $Exp$(22$\cdot$44) with probability 0$\cdot$813, censoring time $Z$ is $U(2,8)$. 
We simulate 100000 independent datasets of size 34 and apply our methods with fiducial sample size 1000.

The simulation results are in Table \ref{table1} and Table \ref{table2} for each scenario, respectively. In the tables, L denotes the error rate that the true parameter is less than the lower confidence limit; U denotes the error rate that the true parameter is greater than the upper confidence limit. The two-sided error rate is obtained by adding the values in column L and U. Values less than 2$\cdot$5\% in individual columns, 5\% in aggregate, indicate good performance. W is the average width of the confidence interval. The row labels are: FD-I the proposed method using log-linear interpolation; FD-C the proposed conservative confidence interval; 
 BPCP-MM beta product confidence procedure using method of moment; BPCP-MC beta product confidence procedure using Monte Carlo; BPCP-MP mid-p beta product confidence procedure; BN Clopper-Pearson Binomial-C. From Table \ref{table1} and Table \ref{table2} we see that our confidence intervals using log-linear interpolation maintain the aggregate coverage, are much shorter, but may be slightly biased to the left. Not surprisingly, the performance of the proposed conservative confidence interval is similar to the beta product confidence procedure method. 
  Recall, Table 1 and Table F$\cdot$2 in \cite{fay2013pointwise} show all asymptotic methods mentioned above have a coverage problem in this heavily censored setup, and so are not considered here.

We also perform a simulation for the mean square error of survival functions, adopting a setting in \cite{fay2013pointwise}. Here, failure time is $Exp(1)$, and censoring time is $U(0,5)$. 
We simulate 100000 independent datasets of size 25 and apply our fiducial methods with fiducial sample size 10000. Since the Kaplan-Meier estimator is not defined after the largest observation if it is censored, we follow  \cite{fay2013pointwise} and define it in three ways after the last observation: KML is defined as 0, KMH is defined as the Kaplan-Meier at the last value, and KMM=0$\cdot$5*KML+0$\cdot$5*KMH. We evaluate mean square error at $t$, where $S(t)=$ 0$\cdot$99, 0$\cdot$9, 0$\cdot$75, 0$\cdot$5, 0$\cdot$25, 0$\cdot$1, 0$\cdot$01. We report the results in Table \ref{table3}. FD-I uses the pointwise median of the log-linear interpolation fiducial distribution as a point estimator of the survival function. BPCP-MM and BPCP-MP are associated median unbiased estimators defined in \cite{fay2013pointwise}. We see the proposed fiducial approach has the smallest mean square error for $S(t)=$ 0$\cdot$99, 0$\cdot$9, 0$\cdot$75, 0$\cdot$5, 0$\cdot$25, 0$\cdot$1, 0$\cdot$01.

\begin{table}[H]
\centering
\caption{Error rate (in percent) and average width of $95\%$ confidence intervals for scenario 1}
\label{table1}
\begin{tabular}{ccccccccccccc}
\hline
         & \multicolumn{3}{c}{t=1} & \multicolumn{3}{c}{t=2} & \multicolumn{3}{c}{t=3} & \multicolumn{3}{c}{t=4} \\
         & L      & U     & W      & L      & U     & W      & L      & U     & W      & L      & U     & W      \\
\hline
FD-I   & 1$\cdot$9    & 2$\cdot$7   & 0$\cdot$21   & 1$\cdot$5    & 2$\cdot$8   & 0$\cdot$29   & 1$\cdot$4    & 3$\cdot$0   & 0$\cdot$37   & 1$\cdot$8    & 3$\cdot$1   & 0$\cdot$45   \\
FD-C   & 0$\cdot$0    & 1$\cdot$4   & 0$\cdot$26   & 0$\cdot$3    & 1$\cdot$6   & 0$\cdot$36   & 0$\cdot$1    & 1$\cdot$5   & 0$\cdot$46   & 0$\cdot$0    & 1$\cdot$4   & 0$\cdot$63   \\
BPCP-MM & 0$\cdot$0    & 1$\cdot$3   & 0$\cdot$26   & 0$\cdot$3    & 1$\cdot$4   & 0$\cdot$35   & 0$\cdot$1    & 1$\cdot$3   & 0$\cdot$46   & 0$\cdot$0    & 1$\cdot$0   & 0$\cdot$62   \\
BPCP-MC & 0$\cdot$0    & 1$\cdot$3   & 0$\cdot$25   & 0$\cdot$4    & 1$\cdot$5   & 0$\cdot$35   & 0$\cdot$1    & 1$\cdot$5   & 0$\cdot$46   & 0$\cdot$0    & 1$\cdot$4   & 0$\cdot$63   \\
BPCP-MP & 0$\cdot$0    & 2$\cdot$2   & 0$\cdot$23   & 0$\cdot$8    & 2$\cdot$3   & 0$\cdot$32   & 0$\cdot$4    & 2$\cdot$2   & 0$\cdot$41   & 0$\cdot$0    & 2$\cdot$0   & 0$\cdot$57   \\
BN       & 0$\cdot$0    & 1$\cdot$4   & 0$\cdot$26   & 0$\cdot$7    & 1$\cdot$3   & 0$\cdot$38   & 0$\cdot$6    & 1$\cdot$3   & 0$\cdot$51   & 0$\cdot$1    & 0$\cdot$9   & 0$\cdot$70    \\
\hline
\end{tabular}
\end{table}

\begin{table}[H]
\centering
\caption{Error rate (in percent) and average width of $95\%$ confidence intervals for scenario 2}
\label{table2}
\begin{tabular}{ccccccccccccc}
\hline
         & \multicolumn{3}{c}{t=3} & \multicolumn{3}{c}{t=4} & \multicolumn{3}{c}{t=5} & \multicolumn{3}{c}{t=6} \\
         & L      & U     & W      & L      & U     & W      & L      & U     & W      & L      & U     & W      \\
\hline
FD-I   & 2$\cdot$2    & 2$\cdot$7   & 0$\cdot$29   & 1$\cdot$9    & 2$\cdot$9   & 0$\cdot$31   & 1$\cdot$7    & 3$\cdot$0   & 0$\cdot$33   & 1$\cdot$5    & 3$\cdot$2   & 0$\cdot$36   \\
FD-C   & 1$\cdot$2    & 1$\cdot$7   & 0$\cdot$33   & 0$\cdot$7    & 1$\cdot$8   & 0$\cdot$36   & 0$\cdot$4    & 1$\cdot$8   & 0$\cdot$40   & 0$\cdot$1    & 1$\cdot$7   & 0$\cdot$46   \\
BPCP-MM & 1$\cdot$3    & 1$\cdot$7   & 0$\cdot$33   & 0$\cdot$7    & 1$\cdot$7   & 0$\cdot$35   & 0$\cdot$4    & 1$\cdot$6   & 0$\cdot$39   & 0$\cdot$1    & 1$\cdot$4   & 0$\cdot$46   \\
BPCP-MC & 1$\cdot$2    & 1$\cdot$8   & 0$\cdot$32   & 0$\cdot$7    & 2$\cdot$0   & 0$\cdot$35   & 0$\cdot$4    & 1$\cdot$9   & 0$\cdot$39   & 0$\cdot$1    & 1$\cdot$9   & 0$\cdot$46   \\
BPCP-MP & 1$\cdot$8    & 2$\cdot$1   & 0$\cdot$30   & 1$\cdot$6    & 2$\cdot$4   & 0$\cdot$32   & 0$\cdot$9    & 2$\cdot$5   & 0$\cdot$36   & 0$\cdot$4    & 2$\cdot$3   & 0$\cdot$41  \\
BN       & 1$\cdot$4    & 1$\cdot$5   & 0$\cdot$35   & 1$\cdot$5    & 1$\cdot$6   & 0$\cdot$40   & 1$\cdot$5    & 1$\cdot$7   & 0$\cdot$46   & 1$\cdot$0    & 1$\cdot$5   & 0$\cdot$56    \\
\hline
\end{tabular}
\end{table}

\begin{table}[H]
\centering
\small
\caption{Mean square error of survival function estimators}
\label{table3}
\begin{tabular}{cccccccc}
\hline
         & $S(t)=$0$\cdot$99 & $S(t)=$0$\cdot$9  & $S(t)=$0$\cdot$75 & $S(t)=$0$\cdot$5   & $S(t)=$0$\cdot$25 & $S(t)=$0$\cdot$1  & $S(t)=$0$\cdot$01 \\
\hline
  FD-I & 0$\cdot$30 & 3$\cdot$11 & 7$\cdot$08 & 10$\cdot$08 & 8$\cdot$24 & 4$\cdot$38 & 1$\cdot$20\\
  BPCP-MM & 0$\cdot$44 & 3$\cdot$44 & 7$\cdot$50 & 10$\cdot$60 & 8$\cdot$83 & 4$\cdot$40 & 1$\cdot$50 \\
  BPCP-MP & 0$\cdot$48 & 3$\cdot$65 & 7$\cdot$54 & 10$\cdot$62 & 8$\cdot$99 & 5$\cdot$79 & 0$\cdot$26 \\
  KML & 0$\cdot$39 & 3$\cdot$61 & 7$\cdot$71 & 10$\cdot$94 & 9$\cdot$38 & 6$\cdot$17 & 0$\cdot$28 \\
  KMM & 0$\cdot$39 & 3$\cdot$61 & 7$\cdot$71 & 10$\cdot$94 & 9$\cdot$35 & 5$\cdot$77 & 0$\cdot$79 \\
  KMH & 0$\cdot$39 & 3$\cdot$61 & 7$\cdot$71 & 10$\cdot$94 & 9$\cdot$33 & 5$\cdot$65 & 2$\cdot$92 \\
\hline
\end{tabular}
\end{table}

Our second simulation study setting comes from \cite{barber1999symmetric} where the data contains more exact observations. In the first scenario, survival time $X$ follows $Exp(10)$, and censoring time $Z$ is $Exp(50)$. In the second scenario, survival time $X$ follows $Exp(10)$, and censoring time $Z$ is $Exp(25)$. We plot the empirical error rates from 5000 simulations with sample size $n = 100$ of different non-asymptotic confidence intervals in the Figures~\ref{picture1} and  \ref{picture2}, respectively. From the Figures \ref{picture1}, Figure \ref{picture2}, and the figures in \cite{barber1999symmetric}, we see that the fiducial confidence intervals do as well as the constrained bootstrap in these settings.

\begin{figure}[H]\centering
\includegraphics[width=150mm]{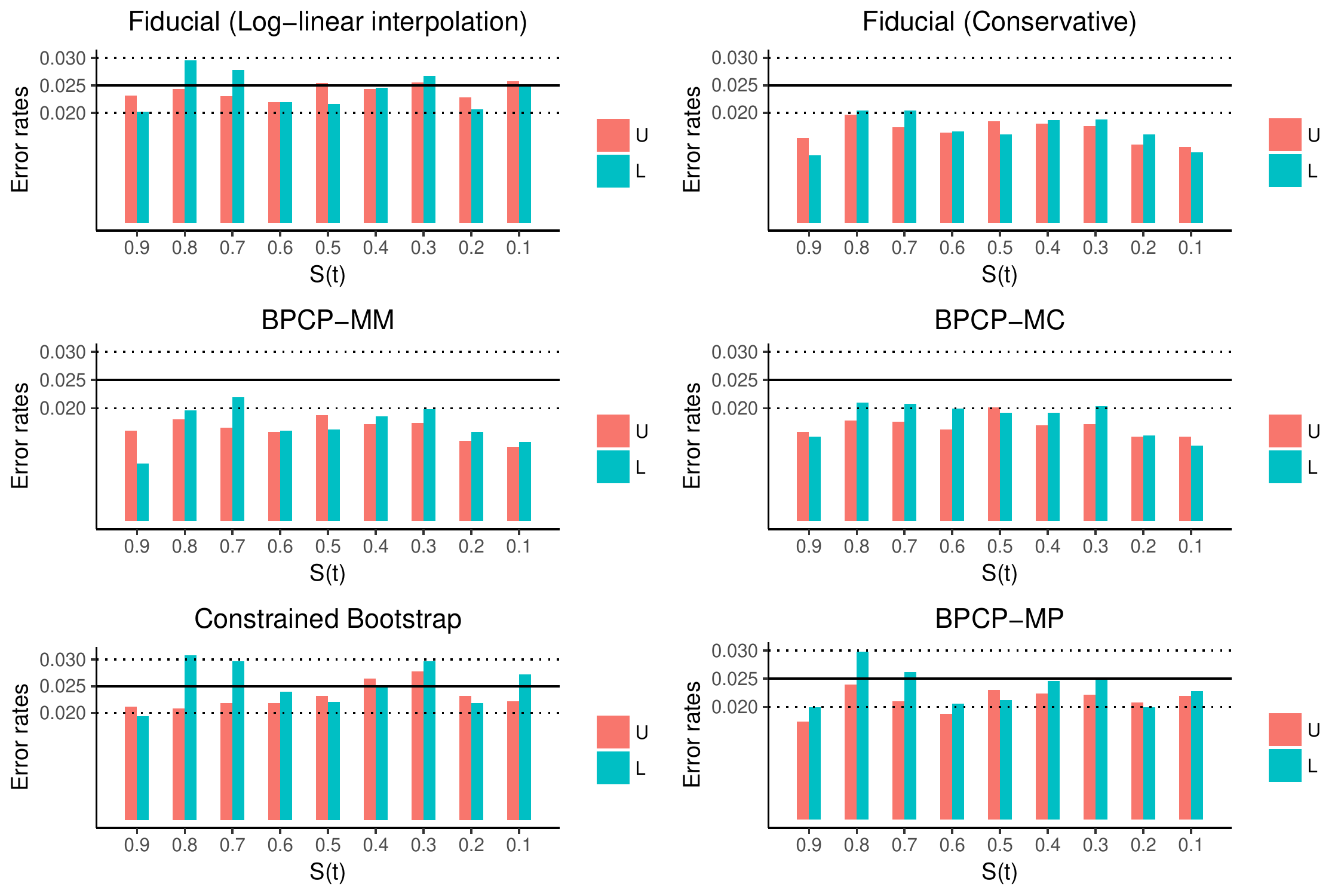}
\caption{Error rate from 5000 simulations of different confidence intervals with $n$ = 100, survival time follows $Exp(10)$, and censoring time follows $Exp(50)$. L denotes the error rate that the true parameter is lower than lower bound. U denotes the error rate that the true parameter is above the upper bound.}
\label{picture1}
\end{figure}

\begin{figure}[H]\centering
\includegraphics[width=150mm]{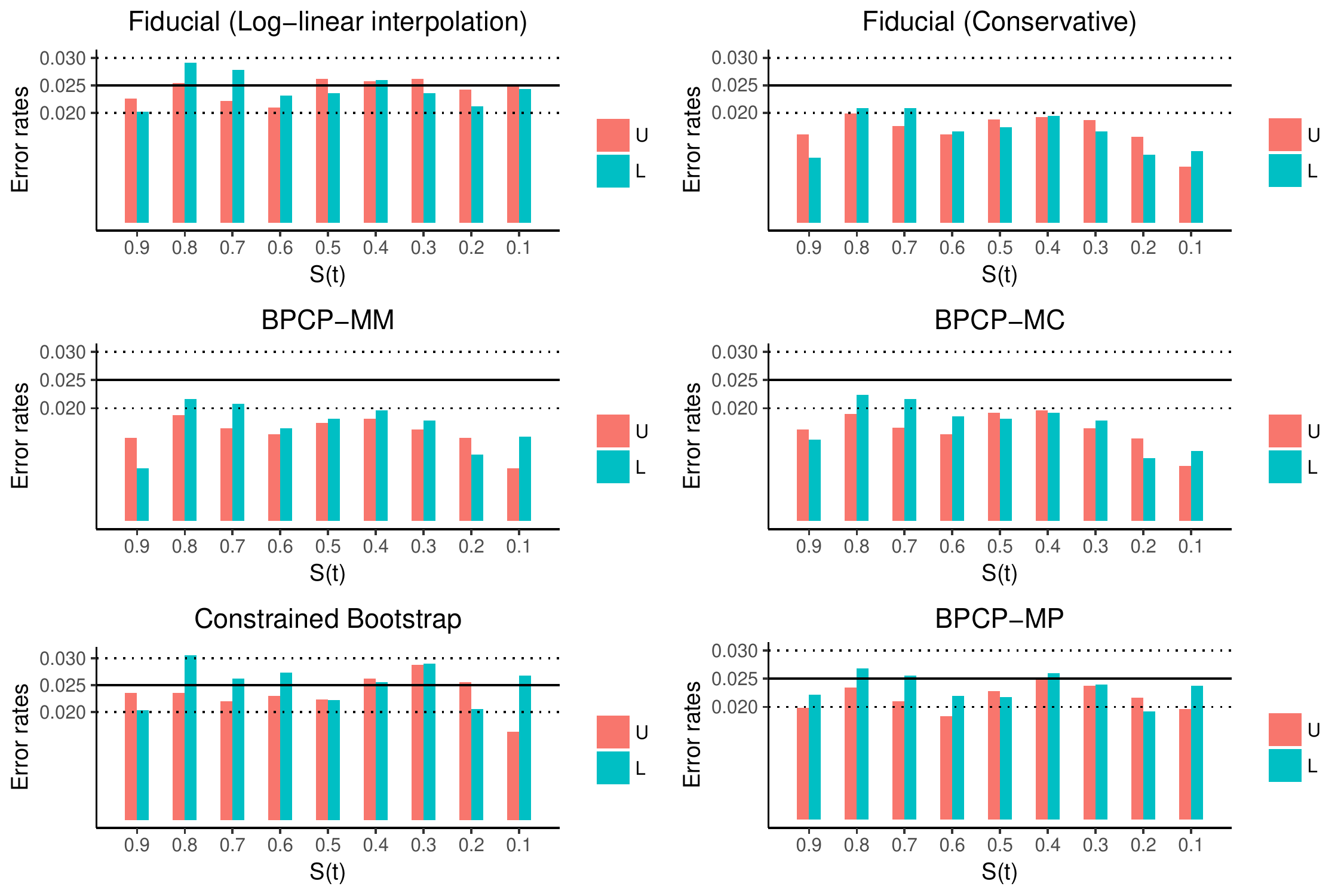}
\caption{Error rate from 5000 simulations of different confidence intervals with $n$ = 100, survival time follows $Exp(10)$, and censoring time follows $Exp(25)$. L denotes the error rate that the true parameter is lower than lower bound. U denotes the error rate that the true parameter is above the upper bound.}
\label{picture2}
\end{figure}

\subsection{Comparisons between the proposed fiducial test and different types of log-rank tests for two sample testing}\label{simulationcom2}
We compare the performance of the proposed fiducial approach with different types of tests for testing the equality of two survival functions \citep{dardis2016survmisc}. A common approach to testing the difference of two survival curves is the log-rank test. There are several modifications of the log-rank tests that consist of re-weighing. In our tables, LR denotes the original log-rank test with weight 1 \citep{mantel1966evaluation};
GW, i.e., Gehan-Breslow generalized Wilcoxon, denotes log-rank test weighted by the number at risk overall \citep{gehan1965generalized}; TW denotes log-rank test weighted by the square root of the number at risk overall \citep{tarone1977distribution}; PP denotes log-rank test with Peto-Peto's modified survival estimate \citep{peto1972asymptotically}; MPP denotes log-rank test with modified Peto-Peto's survival estimate \citep{andersen1982cox}; FH denotes Fleming-Harrington weighted log-rank test \citep{harrington1982class}. The supremum family of tests are designed to detect differences in survival curves which cross \citep{fleming1987supremum,eng2005sample}. SLR denotes the original sup log-rank test with weight 1;
SGW denotes the sup version of GW; STW denotes the sup version of TW; SPP denotes the sup version of PP; SMPP denotes the sup version of MPP; SFH denotes the sup version of FH.

 Four scenarios are considered in this section. In the first scenario the null hypothesis is true. In the remaining three scenarios we consider various departures from the null hypothesis.  For each scenario we simulated 500 independent datasets of size 200, and applied the proposed fiducial test with fiducial sample size 1000 as well as the 12 existing methods mentioned above. Then we calculate the percentage of p-values less than 0$\cdot$05. If the null hypothesis is true, the p-value should follow uniform distribution and the percentage should be around 5\%. If the null hypothesis is false, a higher percentage is preferable as it means bigger power.


In the first scenario, for the first group, failure time is $Weibull(2,1)$ and censoring time follows $|N(0,1)|$. The censoring percentage is approximately 55\%. For the second group, failure time is again $Weibull(2,1)$  but censoring time is $Exp(1)$. The censoring percentage is approximately 60\%. We observe that p-values of all methods follow uniform distribution under $H_0$. Table \ref{table4} shows the percentage of p-value less than 0$\cdot$05. The percentages of p-value less than 0$\cdot$05 of all methods are about 0$\cdot$05.
\begin{table}[!h]
\centering
\small
\caption{Percentage of p-value less than 0$\cdot$05 ($\%$)}
\label{table4}
\begin{tabular}{ccccccccccccc}
\hline
          Fiducial & LR  & GW  & TW   & PP  & MPP  & FH  & SLR & SGW & STW  & SPP & SMPP     & SFH  \\
\hline
 5$\cdot$0 &  5$\cdot$0   &6$\cdot$6   & 6$\cdot$4   & 6$\cdot$4   & 6$\cdot$0   & 4$\cdot$8   & 4$\cdot$6    & 6$\cdot$0   & 6$\cdot$0   &6$\cdot$0    &6$\cdot$0   & 4$\cdot$2    \\
\hline
\end{tabular}
\end{table}

In the second scenario for the first group, failure time follows $Exp(30)$ and censoring time follows $Exp(30)$. The censoring percentage is about 50\%. For the second group, we use $Weibull(30,20)$ to generate failure time,  and $Exp(30)$ for censoring time with censoring percentage of about 50\%. The power of the test at the $\alpha=$0$\cdot$05 level, i.e. the proportion of $p<$ 0$\cdot$05 is shown in Table~\ref{table5}. 
In this scenario, the proposed fiducial test is as powerful as the sup log-rank tests.

\begin{table}[!h]
\centering
\small
\caption{Percentage of p-value less than 0$\cdot$05 ($\%$)}
\label{table5}
\begin{tabular}{ccccccccccccc}
\hline
         Fiducial & LR  & GW  & TW   & PP  & MPP  & FH  & SLR & SGW & STW  & SPP & SMPP     & SFH   \\
\hline
100 & 71$\cdot$0  &98$\cdot$4   & 27$\cdot$6   & 49$\cdot$2   & 50$\cdot$8   & 100   & 100    & 100  & 100   & 100   & 100   & 100    \\
\hline
\end{tabular}
\end{table}


In the third scenario, for the first group, let $Weibull(30,20)$ be the distribution of failure time and $U(0,80)$ be the distribution of censoring time. The censoring percentage is about 25\%. For the second group, let $Weibull(20,20)$ be the distribution of failure time and $U(0,80)$ be the distribution of censoring time. The censoring percentage is about 20\%. The power of the test at the $\alpha=$ 0$\cdot$05 level, i.e. the proportion of $p<$ 0$\cdot$05 is shown in Table~\ref{table6}. We see that only SGW, SPP, SMPP and the proposed fiducial test have power larger than half at $\alpha=$ 0$\cdot$05 level. 
\begin{table}[!h]
\centering
\small
\caption{Percentage of p-value less than 0$\cdot$05 ($\%$)}
\label{table6}
\begin{tabular}{ccccccccccccc}
\hline
      Fiducial &    LR  & GW  & TW   & PP  & MPP  & FH  & SLR & SGW & STW  & SPP & SMPP     & SFH    \\
\hline
  54$\cdot$2 & 21$\cdot$4  &15$\cdot$2   & 4$\cdot$8   & 14$\cdot$0   & 14$\cdot$4   & 39$\cdot$4   & 26$\cdot$6    & 55$\cdot$0  & 39$\cdot$4   & 53$\cdot$8   & 54$\cdot$0   & 29$\cdot$6     \\
\hline
\end{tabular}
\end{table}

In the fourth scenario, for the first group, failure time follows $Exp(1)$, and censoring time follows $|N(0,1)|$ with censoring percentage of about 50\%. For the second group, failure time is $|N(0,1)|$ censored by $Weibull(2,1)$. The censoring percentage is about 40\%. The power of the test at the $\alpha=$ 0$\cdot$05 level, i.e. the proportion of $p<$ 0$\cdot$05 is shown in Table~\ref{table7}. We see that only FH, SFH, and the proposed fiducial test have power larger than 0$\cdot$1 at the $\alpha=$ 0$\cdot$05 level. FH seems to use better weights than other log-rank tests, however, the proposed fiducial test doesn't need to specify any weight and is better than FH in this scenario.

\begin{table}[!h]
\centering
\small
\caption{Percentage of p-value less than 0$\cdot$05 ($\%$)}
\label{table7}
\begin{tabular}{ccccccccccccc}
\hline
    Fiducial  &      LR  & GW  & TW   & PP  & MPP  & FH  & SLR & SGW & STW  & SPP & SMPP     & SFH   \\
\hline
19$\cdot$0 & 7$\cdot$8  &5$\cdot$4   & 4$\cdot$8   & 4$\cdot$6   & 4$\cdot$6   & 16$\cdot$2   & 6$\cdot$6    & 7$\cdot$4  & 5$\cdot$4   & 5$\cdot$4   & 5$\cdot$4   & 10$\cdot$6     \\
\hline
\end{tabular}
\end{table}

\section{Gastric tumor study}\label{realdata}
In this section, we analyze the following dataset presented in \cite{klein2005survival}.
A clinical trial of chemotherapy against chemotherapy combined with radiotherapy in the treatment of locally unresectable gastric cancer was conducted by the Gastrointestinal Tumor Study Group \cite{schein1982comparison}. In this trial, forty-five patients were randomized to each of the two groups and followed for several years. We draw the Kaplan-Meier curves for these two datasets in Figure \ref{fig17}.

\begin{figure}[H]\centering
\subfloat[Kaplan-Meier estimators for two treatment groups.]{{\label{fig17}} \includegraphics[height=48mm]{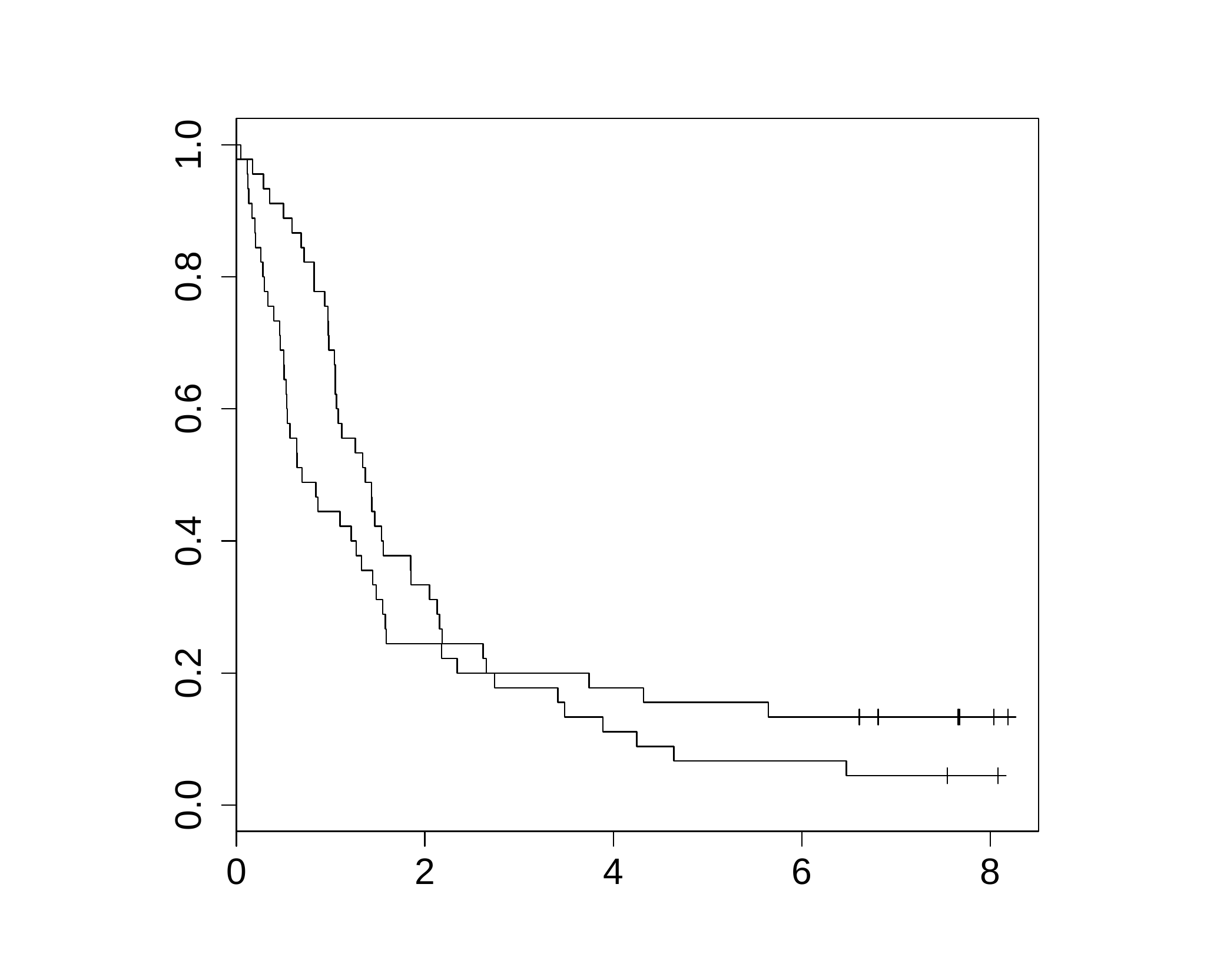}}  \quad
\subfloat[Difference of two sample fiducial distributions using log-linear interpolation.]{{\label{realdiff} } \includegraphics[height=45mm]{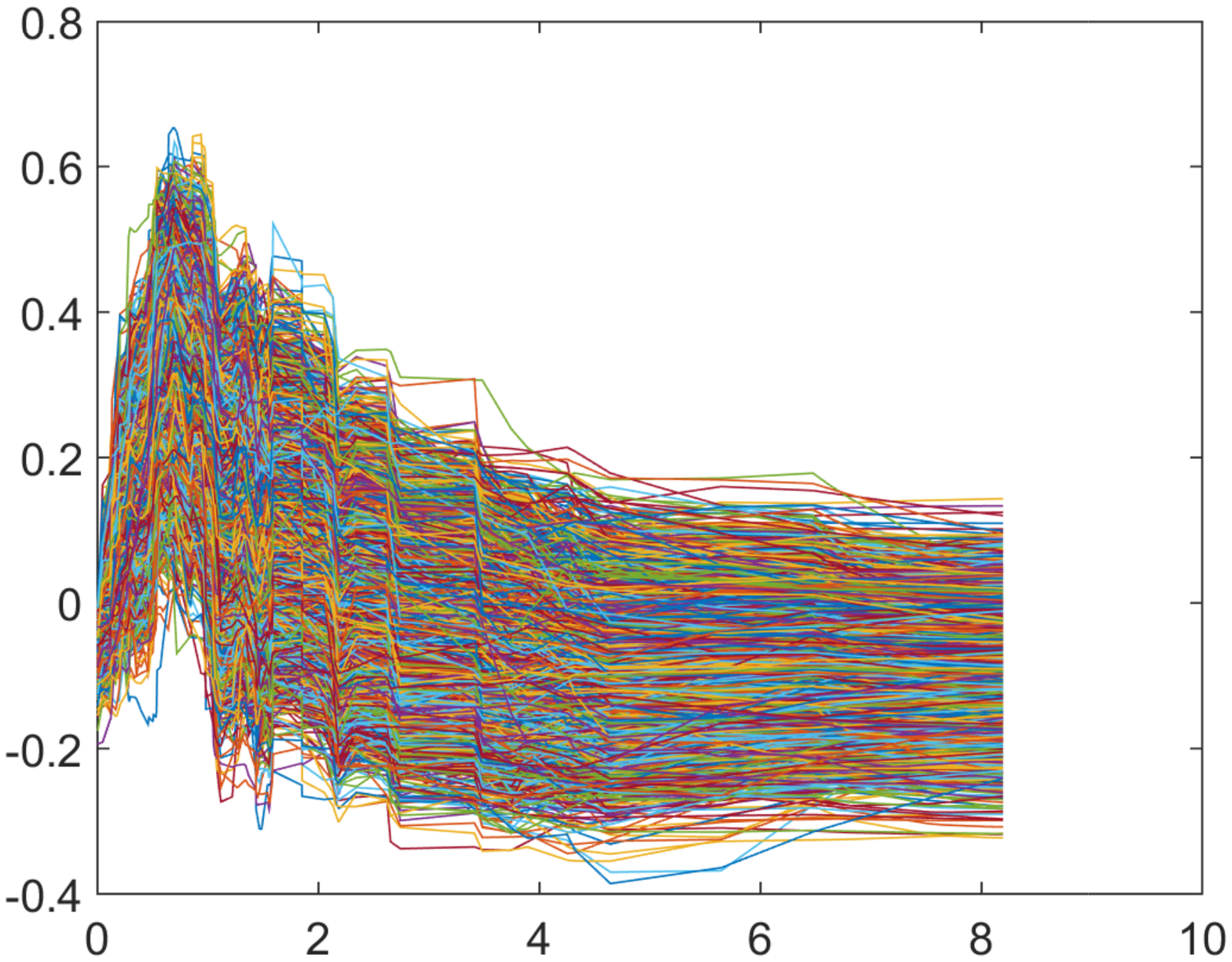}} \\
\caption{}
\end{figure}

By examining the plot in Figure~\ref{fig17} we notice that the two hazards appear to be crossing which could pose a problem for some log-rank tests.  Table~\ref{tablereal1} reports p-values obtained using the same 13 tests described in Section \ref{simulationcom2}.

\begin{table}[!h]
\centering
\small
\caption{p-value of different tests (in \%)}
\label{tablereal1}
\begin{tabular}{cccccccccccccc}
\hline
       & Fiducial  & LR  & GW  & TW   & PP  & MPP  & FH  & SLR & SGW & STW  & SPP & SMPP     & SFH    \\
\hline
p  & 0$\cdot$2 & 63$\cdot$5    & 4$\cdot$6   & 16$\cdot$8   & 4$\cdot$6    & 4$\cdot$3   & 90$\cdot$6   & 5$\cdot$6    & 0$\cdot$6   & 1$\cdot$5   & 0$\cdot$6    & 0$\cdot$6   & 22$\cdot$8     \\
\hline
\end{tabular}
\end{table}

The proposed fiducial test gives the smallest p-value of 0$\cdot$002. To explain why the fiducial approach works on this dataset, we plot the sample of the difference of two fiducial distributions in Figure~\ref{realdiff}. If these two datasets are from the same distribution, 0 should be well within the sample curves. However, from the picture, we could see that the majority of curves are very far away from 0 on the interval $[0,1]$.\\

In order to study the power of our test in this situation, we present a simulation study. We use the data to estimate the failure and censoring distribution for both datasets. Then we use these estimated distributions as truth to generate 500 synthetic datasets that mimic our data. On each dataset, we perform the proposed fiducial test with fiducial sample size 1000 and the 12 different types of log-rank tests. Table~\ref{tablereal2} shows the percentage of p-value less than 0$\cdot$05. We see that the proposed fiducial test has the best power.

\begin{table}[!h]
\centering
\small
\caption{Percentage of p-value less than 0$\cdot$05 ($\%$)}
\label{tablereal2}
\begin{tabular}{ccccccccccccc}
\hline
      Fiducial &     LR  & GW  & TW   & PP  & MPP  & FH  & SLR & SGW & STW  & SPP & SMPP     & SFH   \\
\hline
 87$\cdot$4  & 10$\cdot$2    & 53$\cdot$4   & 31$\cdot$0   & 53$\cdot$0   & 53$\cdot$4   & 7$\cdot$6   & 57$\cdot$6    & 84$\cdot$6   & 78$\cdot$4   & 84$\cdot$6    & 84$\cdot$6   & 23$\cdot$2   \\
\hline
\end{tabular}
\end{table}

\section{Discussion}\label{s:Conclude}

In this paper we derived a nonparametric generalized fiducial distribution for right censored data. This GFD provided us with a unified framework for deriving statistical procedures such as pointwise and curvewise approximate confidence intervals and tests. This is to our knowledge the first time the fiducial distribution has been derived for a non-trivial nonparametric model. We proved a functional Bernstein-von Mises theorem which established the asymptotic correctness of the inference procedures based on our GFD. 
Additionally, our simulation studies suggest that our GFD inference procedures are as good and in some instances better than the many other statistical procedures proposed for the various aspects of this classical problem.
Overall, we view generalized fiducial inference in a similar way as maximum likelihood,  as a general purpose approach that provides good quality answers to many statistical problems. As we can see in the paper, the proposed point estimator of survival function is very similar to Kaplan-Meier estimator. However, the strength of the fiducial approach is in uncertainty quantification when the sample size is small. In particular, we recommend using proposed fiducial confidence intervals and tests in the small sample or heavy censoring cases.

 We conclude by listing some open research problems:
\begin{enumerate}

\item We chose to use the sup-norm in the definition of the curvewise confidence intervals and tests. It could be possible to make the procedure somewhat more powerful by using a different (possibly weighted) norm \citep{nair1984confidence}. Similarly, it might be also possible to use the choice of norm for tuning the GFD tests for use against specific alternatives.

\item The proposed fiducial test seems to be relatively powerful against a broad spectrum of alternatives. It would be interesting to implement it inside other statistical procedures where log-rank tests are recursively used, such as imputed survival random forests and their applications \citep{zhu2012recursively,cui2017someasymptotic,CuZhKo2017}.

\item There seems to be an intriguing connection between GFD and empirical likelihood for semi-parametric models \cite[Chapter 11]{schweder2016confidence}. To investigate this connection should make for a fruitful avenue of future research.

\end{enumerate}

\section*{Acknowledgements}
 We thank Michael P. Fay and Jonathan Williams for helpful conversations and suggestions. We thank the editor, associated editor, and reviewers for many useful comments which led to an improved manuscript. Yifan Cui's research was supported in part by the National Science Foundation under Grant No. 1407732. Jan Hannig's research was supported in part by the National Science Foundation under Grant No. 1512945 and 1633074.


\appendix

\section*{Derivation of generalized fiducial distribution under dependence}
We derive GFD for situations when censoring distribution might depend on the failure time.
In particular, consider the following data generating equation:
\begin{align}\label{eq:depStructuralEq}
Y_i=F^{-1}(U_i)\wedge R_i^{-1}\{V_i \mid F^{-1}(U_i)\} \,,\quad \delta_i=I[F^{-1}(U_i)\leq R_i^{-1}\{V_i \mid F^{-1}(U_i)\}].
\end{align}
Here, $R_i^{-1}(v \mid t)$ is the inverse of the conditional distribution function of the censoring time given failure time $t$ specific to the $i$-th subject. Equation  \eqref{eq:depStructuralEq} allows for any within subject dependence between failure and censoring times. 

The corresponding inverse map for a single observation is: If $\delta_i=1$,
\[
Q^{F,R_i}_1(y_i,u_i,v_i)
=\{F: F(y_i)\geq u_i, F(y_i-\epsilon)< u_i ~\text{for any}~ \epsilon>0\}\times\{R_i:R_i^{-1}(v_i \mid y_i)\geq y_i\}.
\]
If $\delta_i=0$, the inverse map for this datum is
\begin{multline*}
Q^{F,R_i}_0(y_i,u_i,v_i)
=\{F, R_i:\  F(y_i)< u_i,\\ R_i\{y_i \mid F^{-1}(u_i)\}\geq v_i, R_i\{y_i-\epsilon \mid F^{-1}(u_i)\}< v_i ~\text{for any}~ \epsilon>0\}.
\end{multline*}
Unlike in \eqref{eq:QP2}, the inverse
$Q^{F,R}(\by,\bdelta,\bu,\bv)=\bigcap_i Q^{F,R_i}_{\delta_i}(y_i,u_i,v_i)$
does not factorize into a Cartesian product. However, the projection of $Q^{F,R}(\by,\bdelta,\bu,\bv)$ onto the failure time distribution margin remains the same as in \eqref{eq:QP2F}, and $Q^{F,R}(\by,\bdelta,\bu,\bv)\neq\emptyset$ if and only if $Q^F(\by,\bdelta,\bu)\neq\emptyset$. Consequently, the marginal fiducial distribution $Q^F(\by,\bdelta,\bu) \mid  Q^{F,R}(\by,\bdelta,\bu,\bv)\neq\emptyset$  is the same as \eqref{eq:GFDsurv}.

Remarkably, the data generating equation \eqref{eq:depStructuralEq}  leads to the same fiducial distribution for failure times as in the independent case given by \eqref{eq:newStructuralEq}. The difference is that unlike in the fully independent case, \eqref{eq:depStructuralEq} does not provide any useful information about the censoring times and can be viewed as allocating all information in the data to the estimation of failure times.


\section*{Proofs}\label{a:proof}
In this section we collect proofs from Section~\ref{TR}.

\begin{proof}[of Theorem~\ref{consistency}]
For simplicity, in this proof, we denote $\text{pr}^*_{\by,\bdelta}$ as pr.
By the definition of $S^U$ and $\hat{S}$,
\begin{equation}\label{con}
\sup_{s\leq t}{|S^U(s)-\hat{S}(s)|}=\sup_{s \leq t}\left|\prod_{i=1}^{\bar N(s)} (1-B_i)-\prod_{i=1}^{\bar N(s)} (1-\frac{1}{1+\bar K(s_i)})\right|,
\end{equation}
where $B_i \sim Beta(1,\bar{K}(s_i))$, $E(B_i)=\{{1+\bar K(s_i)}\}^{-1}$. 

In order to deal with supremum in Equation~\eqref{con}, we use a coupling idea to get
\begin{equation}\label{sumb}
\text{pr}(\sum_{i=1}^{\bar{N}(t)} B_i^2\leq \frac{\epsilon^2}{n^{1/2}})\geq 1-\bar{N}(t) (1-\frac{\epsilon}{n^{3/4}})^{\bar K(t)}.  
\end{equation}
In particular, define $\tilde{B}_i \sim Beta(1,\bar{K}(t))$ generated by the same uniform random variable as $B_i$, so $\tilde{B}_i\geq B_i$. We have
\begin{multline}\label{a3}
\text{pr}(\max_{1\leq i \leq \bar N(t)} B_i \geq \frac{\epsilon}{n^{3/4}}) \leq 
\bar{N}(t) \bar K(t)\int_0^{1-\frac{\epsilon}{n^{3/4}}} \xi^{\bar K(t)-1}d\xi
=\bar{N}(t) (1-\frac{\epsilon}{n^{3/4}})^{\bar K(t)}. 
\end{multline}
Since $\sum_{i=1}^{\bar{N}(t)} B_i^2 \leq \bar{N}(t) \max_{1\leq i \leq \bar N(t)} B_i^2$, further we have
\begin{align*}
&\text{pr}(\sum_{i=1}^{\bar{N}(t)} B_i^2 \geq \frac{\epsilon^2}{n^{1/2}}) 
\leq \text{pr}(\max_{1\leq i \leq \bar N(t)} B_i \geq \frac{\epsilon}{n^{1/4}\{\bar{N}(t)\}^{1/2}}) \leq  \text{pr}(\max_{1\leq i \leq \bar N(t)} B_i \geq \frac{\epsilon}{n^{3/4}}). 
\end{align*}
So Equation~\eqref{sumb} follows.

In order to bound Equation \eqref{con}, recall the following facts: $E(B_i)=\{1+\bar K(s_i)\}^{-1}\leq \text{$0$$\cdot$$6$}$, $$\text{pr}(\max_{1\leq i \leq \bar N(t)} B_i\leq \text{$0$$\cdot$$6$})=1-\text{pr}(\max_{1\leq i \leq \bar N(t)} B_i> \text{$0$$\cdot$$6$})\geq 1-\bar{N}(t) \text{$0$$\cdot$$4$}^{\bar K(t)},$$ and for any $x\leq \text{$0$$\cdot$$6$}$, $-x-x^2 \leq \log(1-x)\leq -x$. 
Equation \eqref{con} is bounded by
\begin{align}\label{exp}
&\sup_{s\leq t}| \exp\{\sum_{i=1}^{\bar N(s)} \log(1-B_i)\}-\exp\{\sum_{i=1}^{\bar N(s)} \log(1-E(B_i))\}| \nonumber \\
\leq &\sup_{s \leq t} |\exp\{-\sum_{i=1}^{\bar N(s)} B_i\}- \exp\{-\sum_{i=1}^{\bar N(s)} [E(B_i)+\{E(B_i)\}^2]\}|+\sup_{s\leq t} |\exp\{-\sum_{i=1}^{\bar N(s)} (B_i+B_i^2)\}-\exp\{-\sum_{i=1}^{\bar N(s)} E(B_i)\}| \nonumber \\
\leq &\sup_{s \leq t} |\exp\{-\sum_{i=1}^{\bar N(s)} B_i\}-\exp\{-\sum_{i=1}^{\bar N(s)} E(B_i)\}|+ \sup_{s \leq t} |\exp\{-\sum_{i=1}^{\bar N(s)} E(B_i)\}-\exp\{-\sum_{i=1}^{\bar N(s)} [E(B_i)+\{E(B_i)\}^2]\}| \nonumber \\
+& \sup_{s\leq t} |\exp\{-\sum_{i=1}^{\bar N(s)} (B_i+B_i^2)\}-\exp\{-\sum_{i=1}^{\bar N(s)} B_i\}|+\sup_{s\leq t} |\exp\{-\sum_{i=1}^{\bar N(s)} B_i\}-\exp\{-\sum_{i=1}^{\bar N(s)} E(B_i)\}| \nonumber \\
\leq & 2\sup_{s \leq t}|\exp\{-\sum_{i=1}^{\bar N(s)} B_i+E(B_i)-E(B_i)\}-\exp\{-\sum_{i=1}^{\bar N(s)} E(B_i)\}| + \sum_{i=1}^{\bar{N}(t)} \{\bar{K}(t)+1\}^{-2}+ \sum_{i=1}^{\bar{N}(t)} B_i^2 \nonumber \\
\leq & 2\sup_{s\leq t} |\sum_{i=1}^{\bar N(s)} \{B_i-E(B_i)\}|\exp\{-\sum_{i=1}^{\bar N(s)} E(B_i)\}+ \bar{N}(t)/\bar{K}(t)^{-2}+ \sum_{i=1}^{\bar{N}(t)} B_i^2,
\end{align}
with probability larger than $1-\bar{N}(t) \text{$0$$\cdot$$4$}^{\bar K(t)}$. Since $\exp\{-\sum_{i=1}^{\bar N(s)} E(B_i)\}$ is bounded by 1 for any $s  \leq t$, to complete the proof we only need to bound $\sup_{s \leq t}|\sum_{i=1}^{\bar N(s)} \{B_i-E(B_i)\}|$.


Let $T_m=\sum_{i=1}^m  \{B_i-E (B_i)\}$. Then we have $E(T_m)=0$, $\text{var}(T_m) \leq m/\bar K(t)^2 \rightarrow 0$. By Kolmogorov's inequality $\text{pr}(\max_{1\leq m \leq n}| T_m|\geq x)\leq x^{-2} \text{var}(T_n)$ \citep{durrett2010probability}, we know
\begin{align}\label{eq:kom}
\text{pr}(\sup_{s\leq t}|\sum_{i=1}^{\bar N(s)} \{B_i-E(B_i)\}|\geq \epsilon^2/n^{1/2})\leq n\bar{N}(t)/\{\epsilon^2 \bar K(t)\}^2.
\end{align}
Combine \eqref{sumb}, \eqref{exp} and \eqref{eq:kom}, we have
\begin{align*}
\text{pr} \{\sup_{s\leq t}{|S^U(s)-\hat{S}(s)|} \geq 3\epsilon^2/n^{1/2}+\bar{N}(t)/\bar{K}(t)^{-2} \} \leq \bar{N}(t) [ (1-\epsilon/n^{3/4})^{\bar K(t)}+\text{$0$$\cdot$$4$}^{\bar K(t)}+ n/\{\epsilon^2 \bar K(t)\}^2].
\end{align*}
This completes the proof.
\end{proof}

%

For the proof of the next Theorem, we will construct a martingale, and check the two conditions similar to Theorem 5.1.1 in \cite{fleming2011counting}.
\begin{proof}[of Theorem~\ref{main}]
For any $t$ with $\pi(t)>0$, consider a fixed growing sequence of data $(\by,\bdelta)$ for which the statement of Assumption~\ref{ass1} and \ref{ass4} are valid. The set of all such sequences is assumed to have probability one.

For convenience, we denote $\text{pr}^*_{\by,\bdelta}$ as pr in the rest of this section. Additionally, in this proof only, we denote $S^U, F^L$ as $\tilde S, \tilde F$, and define $u(x)=\sum_{i=1}^{\bar N(t)} I\{s_{i-1}<x\leq s_i\} B_i$, where $s_i$ are ordered failure times, $s_0$ is assumed to be 0, and $B_i$ are independent $Beta(1,\bar K(s_i))$. Let 
$\tilde{\Lambda}(s)=\int_0^s u(x) d\bar N(x)=\sum_{i=1}^{\bar N(t)} B_i$. For fixed $t \in \mathcal{I}$, suppose $0\leq s\leq t$, we could rewrite $\tilde S$ recursively as
\begin{align*}
\tilde S(s)=1-\int_0^s \tilde S(x-)d\tilde{\Lambda}(x).
\end{align*}
Then we have
\begin{align*}
\tilde S(s-)-\tilde S(s)=-\Delta \tilde S(s)=\tilde S(s-)\Delta \bar{N}(s)u(s),
\end{align*}
\begin{align*}
\tilde S(s)=\tilde S(s-)\{1-\Delta \bar{N}(s)u(s)\},
\end{align*}
This is the same as Equation~\eqref{pest}.

We know $\hat{S}(s)>0$, therefore
\begin{align*}
\frac{\tilde S(s)}{\hat S(s)}&=\frac{\tilde S(0)}{\hat S(0)}+\int_0^s \tilde S(x-)[-\{\hat S(x) \hat S(x-)\}^{-1}d\hat S(x)]+\int_0^s\frac{1}{\hat S(x)}d\tilde S(x)\\
&=1-\int_0^s \frac{\tilde S(x-)}{\hat S(x)}\{d\bar{N}(x)u(x)-\frac{d\bar{N}(x)}{1+\bar K(x)}\},
\end{align*}
so
\[
\tilde S(s)-\hat S(s)=-\hat S(s) \int_0^s \frac{\tilde S(x-)}{\hat S(x)}\{d\bar{N}(x)u(x)-\frac{d\bar{N}(x)}{1+\bar K(x)}\},
\]
and
\begin{equation}\label{proof1}
n^{1/2}\{\tilde F(s)-\hat F(s)\}=\hat S(s) \int_0^s n^{1/2} \frac{\tilde S(x-)}{\hat S(x)}\{d\bar{N}(x)u(x)-\frac{d\bar{N}(x)}{1+\bar K(x)}\}.
\end{equation}

Now we want to find the asymptotic distribution of right-hand-side of \eqref{proof1}. First, notice that for our fixed sequence of data, $\hat S(s)\to 1-F_0(s)$.  Next, let
\[
U(s)=\int_0^s n^{1/2} \frac{\tilde S(x-)}{\hat S(x)}\{d\bar{N}(x)u(x)-\frac{d\bar{N}(x)}{1+\bar K(x)}\}.
\]
 We need to construct a martingale $M(s)$ to use the martingale central limit theorem. Let
\[
M(s)=\sum_{x \leq s}[u(x)\{1+\bar K(x)\} \{2+\bar K(x)\}^{1/2}- \{2+\bar K(x)\}^{1/2}]\Delta \bar N(x).
\]
It is easy to see that $M(s)$ is a martingale,
\begin{align*}
dM(s)&=0~~ \text{if}~~ \Delta \bar N(s)=0,\\
dM(s)&=u(s)\{1+\bar K(s)\}\{2+\bar K(s)\}^{1/2}-\{2+\bar K(s)\}^{1/2}~~ \text{if}~~ \Delta \bar N(s)=1.
\end{align*}
From here
\[
dM(s)=d\bar N(s)[u(s)\{1+\bar K(s)\}\{2+\bar K(s)\}^{1/2}-\{2+\bar K(s)\}^{1/2}].
\]
Let
\[
H(s)=n^{1/2} \frac{\tilde S(s-)}{\hat S(s)\{1+\bar K(s)\}\{2+\bar K(s)\}^{1/2}},
\]
then
\[
U(s)=\int_0^s H(x)dM(x).
\]
In order to obtain desired convergence, we need to establish the two conditions of Theorem 5.1.1 in \cite{fleming2011counting}.

First, we need to check the first condition
\begin{equation}\label{checkeq1}
<U,U>(s) \stackrel{pr}{\rightarrow} \int_0^s f^2(x)dx, \mbox{ where } f(x)=\{\lambda(x)/\pi(x)\}^{1/2}.
\end{equation}
We have
\[
d<M,M>(x)=\text{var}(dM(x)|\mathcal F_{x-})=\bar{K}(x) d\bar N(x),
\]
and
\[
<U,U>(s)=\int_0^s n \frac{\tilde S^2(x-)\bar K(x)d\bar N(x)}{\hat S^2(x)\{1+\bar K(x)\}^2\{2+\bar K(x)\}}.
\]

By Assumption \ref{ass1}, we have
\[\forall n\geq n_0,\quad  \text{pr}\left(\sup_{x\leq s} \left|\frac{\tilde S^2(x-)}{\hat S^2(x)}\left\{\frac{n}{\bar K(x)}-\frac{1}{\pi(x)}\right\}\right|> \epsilon/3\right)< \epsilon/2.
\]
By the consistency of $\tilde{S}$, we have
\[
\forall n\geq n_1,\quad \text{pr} \left(\sup_{x\leq s} \left|\frac{1}{\pi(x)}\left\{\frac{\tilde S^2(x-)}{\hat S^2(x)}-1\right\}\right|> \epsilon/2\right)< \epsilon/2.
\]
So for $\forall \epsilon >0, \forall n\geq \max(n_0,n_1)$,
\[
\text{pr}\left(\sup_{x\leq s} \left|\frac{\tilde S^2(x-)n}{\hat S^2(x)\bar K(x)}- \frac{1}{\pi(x)}\right|> \epsilon\right)<\epsilon.
\]
Then by Assumption \ref{ass3}, the condition \eqref{checkeq1} is satisfied.

Then we need to check the second condition, i.e., $<U_\epsilon,U_\epsilon>(s) \stackrel{pr}{\rightarrow}0$. For any $\epsilon>0$,
\[
<U_\epsilon,U_\epsilon>(s)=\int_0^s n \frac{\tilde S^2(x-)\bar K(x)d\bar N(x)}{\hat S^2(x)\{1+\bar K(x)\}^2\{2+\bar K(x)\}}I\{  \frac{n^{1/2} \tilde S(x-)}{\hat S(x)\{1+\bar K(x)\}\{2+\bar K(x)\}^{1/2}}\geq \epsilon\}.
\]
Consistency and Assumption \ref{ass1} implies
\begin{align}\label{Hsup1}
&\sup_{x\leq s} \left|H^2(x)\{1+\bar K(x)\}\{2+\bar K(x)\}-\frac{1}{\pi(x)}\right| \nonumber \\
= &\sup_{x\leq s} \left|n\frac{\tilde S^2(x-)}{\hat S^2(x)\{1+\bar K(x)\}}+ \frac{\tilde S^2(x-)}{\hat S^2(x)\pi(x)}-  \frac{\tilde S^2(x-)}{\hat S^2(x)\pi(x)}-\frac{1}{\pi(x)}\right|\nonumber \\
\leq &\frac{1}{\hat S^2(s)}\sup_{x\leq s} \left|\frac{n}{1+\bar K(x)}-\frac{1}{\pi(x)}\right|+\frac{1}{\hat S(s)\pi(s)}\sup_{x\leq s}\left|\tilde S(x-)-\hat S(x)\right| \stackrel{pr}{\rightarrow}0.
\end{align}
From $\bar K(x) \stackrel{pr}{\rightarrow} \infty$ and monotonicity of $\bar K$ we have
\[
\inf_{x\leq s} |\bar K(x)|\stackrel{pr}{\rightarrow}\infty.
\]
Combined with Equation~\eqref{Hsup1}, we have
\begin{align*}
\sup_{x\leq s} |H(x)|\stackrel{pr}{\rightarrow}0,
\end{align*}
which is equivalent to
\begin{align*}
\sup_{x\leq s} I\{  \frac{n^{1/2} \tilde S(x-)}{\hat S(x)\{1+\bar K(x)\} \{2+\bar K(x)\}^{1/2}}\geq \epsilon\} \stackrel{pr}{\rightarrow}0.
\end{align*}
Then
\begin{align*}
\int_0^s n \frac{\tilde S^2 (x-)\bar K(x)d\bar N(x)}{\hat S^2(x)\{1+\bar K(x)\}^2 \{2+\bar K(x)\}}I\{  \frac{n^{1/2} \tilde S(x-)}{\hat S(x)\{1+\bar K(x)\} \{2+\bar K(x)\}^{1/2}}\geq \epsilon\} \stackrel{pr}{\rightarrow}0,
\end{align*}
and the second condition is satisfied.

By replicating the proof in Theorem 5.1.1 in \cite{fleming2011counting} for our martingale, we get $U(s)\Rightarrow U_{\infty}(s)=\int_0^s \{\lambda(x)/\pi(x)\}^{1/2}dW(x)$. We know

\begin{align*}
\text{cov}(U_{\infty}(s_1),U_{\infty}(s_2))=\int_0^{s_1}\frac{\lambda(x)}{\pi(x)}ds=\gamma(s_1)~~\text{for}~~s_1<s_2,
\end{align*}
and
\begin{align*}
\text{cov}(W\{\gamma(s_1)\},W\{\gamma(s_2)\})=\gamma(s_1)~~\text{for}~~s_1<s_2.
\end{align*}
So $U_{\infty}(\cdot)$ is the same as $W\{\gamma(\cdot)\}$. The conclusion of the Theorem \ref{main} follows.
\end{proof}

We conclude this section by proving the corollary.

\begin{proof}[of Corollary~\ref{corollary}]
We know $n^{1/2}\{\hat F(\cdot)-F_0(\cdot)\}\rightarrow \{1-F_0(\cdot)\}W\{\gamma(\cdot)\}$ on $D[0,t]$ and $n^{1/2}\{F^L(\cdot)-\hat F(\cdot)\}\rightarrow \{1-F_0(\cdot)\}W\{\gamma(\cdot)\}$ in distribution on $D[0,t]$ almost surely from Theorem \ref{main}.

From the properties in \eqref{p1} we have that the fiducial probability
\begin{align}\label{continuous}
\notag 1-\alpha&=\text{pr}^*_{\by,\bdelta}(\{F:\Psi\{F(\cdot)-\hat F(\cdot)\}\leq \epsilon_{n,\alpha}\})\\
&=\text{pr}^*_{\by,\bdelta}(\{F:\Psi[ n^{1/2} \{F(\cdot)-\hat F(\cdot)\}]\leq \psi(n^{1/2})\epsilon_{n,\alpha}\}).
\end{align}
By continuous mapping theorem and the fact  that $\Psi[\{1-F_0(\cdot)\}W\{\gamma(\cdot)\}]$ is continuous and has unique $(1-\alpha)$-th quantile, the right-hand side of Equation \eqref{continuous} converges to
\begin{align*}
\text{pr}(\Psi[\{1-F_0(\cdot)\}W\{\gamma(\cdot)\}]\leq \epsilon_\infty),
\end{align*}
where $\epsilon_\infty$ is the unique limit of $\psi(n^{1/2})\epsilon_{n,\alpha}$, and pr is the sampling distribution of the data.

Then we have
\begin{align*}
\text{pr}(F_0 \in \{F: \Psi\{F(\cdot)-\hat F(\cdot)\}\leq \epsilon_{n,\alpha} \} )&= \text{pr}( \Psi\{F_0(\cdot)-\hat F(\cdot)\}\leq \epsilon_{n,\alpha}  )\\
&=\text{pr}(\Psi[n^{1/2}\{F_0(\cdot)-\hat F(\cdot)\}]\leq \psi(n^{1/2})\epsilon_{n,\alpha}  )\\
&\rightarrow \text{pr}(\Psi[\{1-F_0(\cdot)\}W\{\gamma(\cdot)\}]\leq \epsilon_\infty)\\
&=1-\alpha.
\end{align*}

This completes the proof.
\end{proof}



\section*{Results for alternative selection schemes}\label{a:lemma}
\begin{lemma}\label{a:lemma1}
The following modification of Theorem \ref{consistency} is valid for $S^L$:
{\small
\begin{multline}\label{eq:concentration2}
\text{pr}^*_{\by,\bdelta} \{\sup_{s\leq t}{|S^L(s)-\hat{S}(s)|} \geq \epsilon/n^{3/4}+3\epsilon^2/n^{1/2}+\bar{N}(t)/\bar{K}(t)^{-2} \} \\
\leq \{\bar{N}(t)+1\}(1-\epsilon/n^{3/4})^{\bar K(t)}+\bar{N}(t) [\text{$0$$\cdot$$4$}^{\bar K(t)}+ n/\{\epsilon^2 \bar K(t)\}^2].
\end{multline}
}
The same bound also holds for $S^I$. Moreover, Theorem 
 \ref{main} holds for $S^L$ and $S^I$.
\end{lemma}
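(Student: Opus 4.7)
The strategy is to couple $S^L$ and $S^I$ with $S^U$ on $[0,t]$ so that $\sup_{s\le t}|S^U(s)-S^L(s)|$ and $\sup_{s\le t}|S^U(s)-S^I(s)|$ are both controlled by the maximum of the Beta increments $B_i$ appearing in the product representation~\eqref{pest}. Once this is established, the extended concentration inequality \eqref{eq:concentration2} follows from Theorem~\ref{consistency} via a triangle inequality, and the extension of Theorem~\ref{main} follows because the coupling error scaled by $n^{1/2}$ vanishes in probability.

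I would first express $F^L$ and $F^U$ in terms of the ordered uniforms $U^*_{(1)}<\cdots<U^*_{(n)}$ returned by the sampling algorithm. At each failure $s_i$ the algorithm sets $F^L(s_i)=U^*_{(\Pi(i))}$, while for any $s\le t$ with latest preceding failure $s_i$ and next failure $s_{i+1}$ (possibly beyond $t$), $F^U(s)\le U^*_{(m(s))}$, where $m(s)=\min\{\Pi(k):y_k>s\}$ is the smallest rank assigned to an observation strictly after $s$. Because failures always take the smallest remaining rank, $m(s)\le \Pi(i+1)$, and hence
\[
\sup_{s\le t}\{F^U(s)-F^L(s)\}\le \max_{1\le i\le \bar N(t)+1}\bigl\{U^*_{(\Pi(i))}-U^*_{(\Pi(i-1))}\bigr\}\le \max_{1\le i\le \bar N(t)+1}B_i,
\]
using the recursion $U^*_{(\Pi(i))}=U^*_{(\Pi(i-1))}+\{1-U^*_{(\Pi(i-1))}\}B_i$ from the distributional description preceding \eqref{pest}, with the convention $U^*_{(\Pi(0))}=0$. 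Because $S^I$ agrees with $S^U$ at the failure times and is obtained by log-linear interpolation in between, subject to $S^I\ge S^L$, the supremum $\sup_{s\le t}|S^U(s)-S^I(s)|$ is bounded by the same maximum.

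Arguing exactly as in inequality~\eqref{a3} in the proof of Theorem~\ref{consistency}, a union bound on $B_i\sim\text{Beta}(1,\bar K(s_i))$ combined with $\bar K(s_i)\ge \bar K(t)$ for $s_i\le t$ gives
\[
\text{pr}^*_{\by,\bdelta}\!\left(\max_{1\le i\le \bar N(t)+1}B_i\ge \epsilon/n^{3/4}\right)\le \{\bar N(t)+1\}(1-\epsilon/n^{3/4})^{\bar K(t)}.
\]
Combining this with Theorem~\ref{consistency} through $|S^L(s)-\hat S(s)|\le |S^L(s)-S^U(s)|+|S^U(s)-\hat S(s)|$ delivers the modified bound \eqref{eq:concentration2}, and the identical chain of inequalities handles $S^I$.

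For the functional Bernstein--von Mises extension I would choose $\epsilon=n^{1/8}$ in the display above, so that $\epsilon/n^{3/4}=n^{-5/8}$ and under Assumption~\ref{ass1} the right-hand side tends to zero. Consequently $n^{1/2}\sup_{s\le t}|S^U(s)-S^L(s)|\to 0$ in probability, so by Slutsky's theorem $n^{1/2}\{F^U(\cdot)-\hat F(\cdot)\}$ converges in distribution in $D[0,t]$ to the same Gaussian process $\{1-F_0(\cdot)\}W\{\gamma(\cdot)\}$ as in Theorem~\ref{main}; the argument for $S^I$ is identical. The main obstacle is the coupling step: one must verify carefully that at every intermediate $s$ the gap $F^U(s)-F^L(s)$ is controlled by a single Beta increment, even though $F^U$ may change at censoring times and the smallest unused rank at $s$ need not coincide with either $\Pi(i)$ or $\Pi(i+1)$.
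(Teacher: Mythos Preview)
Your proposal is correct and follows essentially the same route as the paper: bound $\sup_{s\le t}|S^U(s)-S^L(s)|$ by a maximum of the Beta increments, invoke the union bound~\eqref{a3}, add this to Theorem~\ref{consistency} via the triangle inequality, and finish with Slutsky's theorem for the Bernstein--von Mises extension.

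The only difference is in how the coupling bound is obtained. You work through the rank assignment $\Pi$ and worry (in your final paragraph) about controlling $F^U(s)-F^L(s)$ at intermediate $s$ where the smallest unused rank need not be $\Pi(i)$ or $\Pi(i+1)$. The paper sidesteps this entirely with the sandwich $S^U(s^+)\le S^L(s)\le S^U(s)$, where $s^+$ is the next failure time after $s$; since $S^U$ jumps only at failure times, this immediately gives
\[
0\le S^U(s)-S^L(s)\le S^U(s)-S^U(s^+)=\prod_{i\le \bar N(s)}(1-B_i)\,B_{\bar N(s)+1}\le B_{\bar N(s)+1}\le \max_{1\le i\le \bar N(t)+1}B_i,
\]
which is exactly the bound you aim for but without any rank bookkeeping. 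This resolves the ``main obstacle'' you flag and is worth adopting.
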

\begin{proof}
Recall that $S^L(s)\geq S^U(s^+)$ and $S^L(s)\leq S^U(s)$ hold for any $s\leq t$, where $s^+$ denotes the next failure time right after $s$. 
Furthermore, the difference between $S^U(s)$ and $S^U(s^+)$ is bounded by
\begin{align*}
| S^U(s)-S^U(s^+)|  = &|\prod_{i=1}^{\bar N(s)} \{1- B_i\} - \prod_{i=1}^{\bar N(s)} \{1- B_i\}(1- B_{\bar N(s)+1})|\nonumber \\
 = &|\prod_{i=1}^{\bar N(s)} \{1- B_i\}  B_{\bar N(s)+1}|
\leq \max_{1\leq i\leq \bar N(s)+1} B_i,
\end{align*}
where $B_i$ follows $Beta(1,\bar K(s_i))$ and $s_i$ are ordered failure times before or at time $s$. By Equation \eqref{a3} in the previous section, we have
\begin{align*}
\text{pr}^*_{\by,\bdelta}( | S^U(s^+)-S^U(s)|> \epsilon/n^{3/4})\leq \text{pr}^*_{\by,\bdelta}( \max_{1\leq i\leq \bar N(s)+1} B_i >\epsilon/n^{3/4} )\leq \{\bar{N}(t)+1\} (1-\frac{\epsilon}{n^{3/4}})^{\bar K(t)}. 
\end{align*}
Notice that $S^L(s)-\hat S(s)= \{S^L(s)-S^U(s)\}+ \{S^U(s)-\hat S(s)\}$ and $S^U(s^+)-S^U(s)\leq S^L(s)-S^U(s) \leq 0$. This implies \eqref{eq:concentration2}. In addition, Theorem \ref{main} holds for $S^L$ and $S^I$ by Slutsky's theorem.
%
\end{proof}

\begin{lemma}\label{a:lemma2}
 For any failure time $t$, $E^*_{\by,\bdelta}[S^L(t)]\leq \tilde S(t)\leq E^*_{\by,\bdelta}[S^U(t)]$, where $E^*_{\by,\bdelta}$ is the expectation with respect to $\text{pr}^*_{\by,\bdelta}$, and $\tilde S(t)$ is the Kaplan-Meier estimator.
\end{lemma}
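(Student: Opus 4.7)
The plan is to treat the two inequalities separately, using \eqref{pest} together with an analogous representation for $S^L$ at failure times. Write $s_1<s_2<\cdots$ for the ordered failure times, $a_k=\bar K(s_k)$, and $u_k$ for the uniform label $\bU^*_{(\Pi(s_k))}$ assigned to the observation at $s_k$ by the sampling algorithm of Section~\ref{sur}. I will rely, as in Section~\ref{TR}, on the absence of ties.

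For the upper inequality, Equation~\eqref{KM} gives $E^*_{\by,\bdelta}[S^U(s_k)]=\hat S(s_k)=\prod_{i=1}^{k} a_i/(a_i+1)$. A factor-by-factor comparison with $\tilde S(s_k)=\prod_{i=1}^{k}(a_i-1)/a_i$ finishes this direction: each $a/(a+1)$ dominates the corresponding $(a-1)/a$ since $a^2\geq a^2-1$.

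For the lower inequality, the first step is to derive the representation $S^L(s_k)=S^U(s_k)\,(1-B_k')$, where $B_k'\sim\mathrm{Beta}(1,a_k-1)$ is independent of $S^U(s_k)$. The inverse image in \eqref{eq:QP2F}, taken in closure, shows that $F^U(s_k)$ equals the smallest label attached to any observation strictly after $s_k$: every such label $\bU^*_{(\Pi(i))}$ imposes $F(s_k)\leq \bU^*_{(\Pi(i))}$, either through the censoring constraint $F(y_j)\leq\bU^*_{(\Pi(j))}$ or through the left-limit failure constraint $F(y_i-)\leq\bU^*_{(\Pi(i))}$. The algorithm then ensures that, conditional on $u_k$, the $a_k-1$ labels assigned to later observations are iid $\mathrm{Uniform}(u_k,1)$, so their rescaled minimum is a $\mathrm{Beta}(1,a_k-1)$ variable independent of $u_k$. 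Taking expectations gives $E^*_{\by,\bdelta}[S^L(s_k)]=\hat S(s_k)\,(a_k-1)/a_k$.

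With this explicit formula in hand, the claim follows by induction on $k$. The base case $k=1$ reduces to $(a_1-1)/(a_1+1)\leq(a_1-1)/a_1$, which is immediate. For the inductive step, rewrite $E^*_{\by,\bdelta}[S^L(s_k)]=\hat S(s_{k-1})(a_k-1)/(a_k+1)$ and apply the hypothesis in the form $\hat S(s_{k-1})\leq \tilde S(s_{k-1})\,a_{k-1}/(a_{k-1}-1)$; after cancellation the target collapses to $a_{k-1}\geq a_k+1$, which always holds because at least the failure at $s_{k-1}$ has left the risk set by time $s_k$. The degenerate case $a_k=1$ makes both sides zero. The main obstacle is the representation step, where one must carefully identify $F^U(s_k)$ as the minimum of the labels remaining after $s_k$'s pick and then exploit the exchangeability available after conditioning on $u_k$ to isolate an independent Beta factor; once this geometric identification is in place, both inequalities are short algebraic consequences.
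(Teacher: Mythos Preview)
Your argument is correct and follows the paper's approach closely: both use \eqref{KM} for the upper bound via a factor-by-factor comparison, and both obtain the representation $S^L(s_k)=S^U(s_k)(1-B'_k)$ with $B'_k\sim\mathrm{Beta}(1,a_k-1)$ independent of $S^U(s_k)$ for the lower bound. The only difference is that the paper simply asserts the resulting product inequality
\[
\prod_{i=1}^{k}\frac{a_i}{a_i+1}\cdot\frac{a_k-1}{a_k}\ \le\ \prod_{i=1}^{k}\frac{a_i-1}{a_i},
\]
whereas you supply an inductive proof hinging on $a_{k-1}\ge a_k+1$; this is a genuine addition, since the inequality is not termwise and does require the monotonicity of the risk set across successive failure times.
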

\begin{proof}
For any failure time $t$, we have $S^U(t)=\prod_{i=1}^{\bar N(t)} \left \{ 1- B_i \right \}$.
From here
\begin{equation*}
E^*_{\by,\bdelta} [S^U(t)]=\prod_{i=1}^{\bar N(t)} \left \{ 1-\frac{1}{1+\bar K(s_i)} \right \} \geq \prod_{i=1}^{\bar N(t)} \left \{ 1-\frac{1}{\bar K(s_i)} \right \},
\end{equation*}
where $s_i$ are ordered failure times. Similarly, $S^L(t)=S^U(t)(1-B)$,
where $B$ follows $Beta(1,\bar K(t)-1)$ and is independent of $B_i$ for $i \leq \bar N(t)$. Thus
\begin{equation*}
E^*_{\by,\bdelta} [S^L(t)]=\prod_{i=1}^{\bar N(t)} \left \{ 1-\frac{1}{1+\bar K(s_i)} \right \}(1-\frac{1}{\bar K(t)}) \leq \prod_{i=1}^{\bar N(t)} \left \{ 1-\frac{1}{\bar K(s_i)} \right \}.
\end{equation*}
This completes the proof.
\end{proof}

%
%
%
%
%
%
%
%
%
%
%
%

\section*{Algorithm for sampling from the fiducial distribution}\label{al2}

1. Generate $U=(u_1,\ldots,u_n)$ from $U(0,1)$ and sort them. Denote sorted values as $preU$.

2. Sort the data. Denote sorted data as $(y_1,\ldots,y_n)$ and $(\delta_1,\ldots,\delta_n)$. 

3. Initialize $LowerFid=(0)_{n+1}$, $UpperFid=(1)_{n+1}$.

4. For $i=1$ to $n$:

Let $UpperFid(i)=preU(1)$, where $preU(1)$ is the smallest element left in $preU$.

If $\delta=1$, set $LowerFid(i+1)=preU(1)$, and delete $preU(1)$;

If $\delta=0$, randomly pick one $u$ from $preU$, set $LowerFid(i+1)=LowerFid(i)$, and delete the selected $u$ from $preU$.

5. We output 3 survival functions that are needed for the conservative and log-linear interpolation methods.

5.1. Lower fiducial bound: using $LowerFid$ as a fiducial curve.

5.2. Upper fiducial bound: using $UpperFid$ as a fiducial curve.


5.3. Log-linear interpolation: Fit a continuous fiducial distribution by linear interpolation based on failure observations as described in Section \ref{nonsur}. Then correct the linear interpolation at the censoring observations so that the upper fiducial bound on continuous distribution function (lower fiducial bound for survival function) is satisfied. Let $y_{n-k}$ ($k=0,1,\ldots,n-1$) denotes the last failure observation. We fit a single line after last uncensored observation and take the maximum of  $s_0,s_1,\ldots,s_{k}$ as slope, where $s_1$ is the slope between $(y_{n-k},\log u_{n-k})$ and $(y_{n-k+1},\log u_{n-k+1}),$ $\ldots$, $s_{k}$ is the slope between $(y_{n-k},\log u_{n-k})$ and $(y_{n},\log u_{n})$, $s_0$ is the slope between $(\tilde y,\log \tilde u)$ and $(y_{n-k},\log u_{n-k})$, $\tilde y$ is the second last uncensored observation. If there is only one failure time, $\tilde y$ and $\log \tilde u$ are 0.

6. From step 1--5 we get one curve of fiducial distribution. Repeat step 1--5 to get one fiducial sample with $m$ curves.

\bibliographystyle{rss}
\bibliography{Bibliography}

\end{document}